\newcommand{\uj}{\newcommand}
\uj{\vege}{\hspace{1cm} \raisebox{-.6ex}{$\Box$}}
\uj{\vspp}{}
\uj{\vspe}{}
\uj{\vsp}{}
\uj{\vspee}{}
\uj{\vspm}{}
\uj{\vspmm}{}
\uj{\vspmmm}{}
\uj{\vspmini}{}
\uj{\vspminni}{}
\uj{\hspe}{\hspace*{1cm}}
\uj{\nyil}{\rightarrow}
\uj{\nnyil}{\Rightarrow}
\uj{\lny}{\!\downarrow}
\uj{\paral}{\parallel}
\uj{\set}{\mbox{\bf Set}}
\uj{\rel}{\mbox{\bf Rel}}
\uj{\dqt}{\mbox{\bf DQT}}
\uj{\dqtn}{\dqt _0}
\uj{\fdqtn}{F_{\mbox {{\scriptsize {\bf DQT}}}_0}}
\uj{\egy}{\mbox {{\bf 1}}}
\uj{\eps}{\epsilon }
\uj{\fdh}{\mbox{\bf FdHilb}}
\uj{\qt}{\cal{QT}}
\uj{\Complex}{\mathbb{C}}
\uj{\fny}{\uparrow}
\uj{\ffny}{{\Uparrow}}
\uj{\kit}{\fny_{k-i}}
\uj{\iso}{\mbox{\bf{Iso}}}
\uj{\bckl}{\backslash}
\uj{\calh}{\mathcal{H}}
\uj{\calm}{\mathcal{M}}
\uj{\caln}{\mathcal{N}}
\uj{\calk}{\mathcal{K}}
\uj{\call}{\mathcal{L}}
\uj{\calu}{\mathcal{U}}
\uj{\calv}{\mathcal{V}}
\uj{\calz}{\mathcal{Z}}
\uj{\alg}{Alg}
\uj{\daga}{A^{\dagger}}
\uj{\dagb}{B^{\dagger}}
\uj{\dagc}{C^{\dagger}}
\uj{\dagd}{D^{\dagger}}
\uj{\kapo}{\updownarrow}
\uj{\trab}{\mathrm{Tr}_{A,B}}
\uj{\truab}{\trab ^{U}}
\uj{\fbab}{\uparrow _{A,B}}
\uj{\fbuab}{\fbab ^U}
\uj{\calp}{\mathcal{P}}
\uj{\calt}{\mathcal{T}}
\uj{\calc}{\mathcal{C}}
\uj{\calg}{\mathcal{G}}
\uj{\calf}{\mathcal{F}}
\uj{\cala}{\mathcal{A}}
\uj{\calb}{\mathcal{B}}
\uj{\calgsig}{\calg (\Sigma )}
\uj{\calnt}{\vec{\calt}}
\uj{\tot}{{\bf T}}
\uj{\ibf}{{\bf i}}
\uj{\jbf}{{\bf j}}
\uj{\wbf}{{\bf w}}
\uj{\robf}{\rho }
\uj{\oast}{\circledast}
\uj{\biset}{\mbox {\bf Bset}}
\uj{\ima}{\mbox {\bf IMA}}
\uj{\sdcc}{S$^2$DC$^2$}
\uj{\bmc}{c}
\uj{\imm}{\mathcal {I}}
\uj{\alp}{\mathcal {A}}
\uj{\sdc}{\mathcal {S}}
\uj{\bind}{\mbox{\bf Ind}}
\uj{\cali}{\mathcal{I}}
\uj{\calgbar}{\vec{\calg}}
\uj{\barsig}{\bar{\Sigma }}
\uj{\hatdelta}{\hat{\delta }}
\uj{\nyilg}{\vec{G}}
\uj{\otc}{\otimes _{\calc }}
\uj{\intc}{Int(\calc )}
\uj{\intcn}{Int_0(\calc)}
\theoremstyle{theorem}
\newtheorem{theorem}{Theorem}
\newtheorem{lemma}[theorem]{Lemma}
\newtheorem{proposition}[theorem]{Proposition}
\newtheorem{corollary}[theorem]{Corollary}
\theoremstyle{definition}
\newtheorem{definition}[theorem]{Definition}
\renewenvironment{proof}{\noindent\emph{Proof.}}{\hfill\textsf{q.e.d.}\medskip}
\title{Quantum Turing automata}
\author{Mikl\'os Bartha
\institute{
Department of Computer Science, Memorial University of Newfoundland,
St.\ John's, NL, Canada}
\email{bartha@mun.ca}
}
\begin{document}
\maketitle
\begin{abstract}
A denotational semantics of quantum Turing machines having a quantum control
is defined in the dagger compact closed category of finite dimensional Hilbert
spaces. Using the Moore-Penrose generalized inverse,
a new additive trace is introduced on the restriction
of this category to isometries, which trace is carried over to
directed quantum Turing machines as monoidal automata. The Joyal-Street-Verity $Int$
construction is then used to extend this structure to a reversible bidirectional one.
\end{abstract}
\section{Introduction}
In recent years, following the endeavors of Abramsky and Coecke to
express some of the basic quantum-mechanical concepts in an abstract
axiomatic category theory setting, several models have been
worked out to capture the semantics of quantum information protocols
\cite{abr1} and programming languages \cite{prakash,has,sel}. Concerning
quantum hardware, an algebra of automata which include both classical and
quantum entities has been studied in \cite{waltt}. In all of these works,
while the model could manipulate quantum data structures, the
actual control flow of the data was assumed to be necessarily classical.

The objective of the present paper is to show that the idea of a quantum control
is logically sound and feasible, and to provide a denotational style
semantics for quantum Turing machines having such a control. At the same time,
the rigid topological layout of Turing machines as a linear array of
tape cells is replaced by a flexible graph structure, giving rise to the
concept of Turing automata and graph machines as introduced in \cite{tur}.
By denotational semantics we mean that the changing of the tape
contents caused by the entire computation process is specified directly as a
linear operator, rather than just one step of this process.

 Our presentation will use the language of \cite{abr1,tra,dag},
but it will be specific to the concrete dagger compact closed category
$(\fdh ,\otimes)$ of finite dimensional Hilbert spaces at this time.
One can actually
read Section~4 separately as an interesting study in linear algebra,
introducing a novel application of the Moore-Penrose generalized
inverse of range-Hermitian operators by taking their Schur complement in certain
block matrix operators. This is the main technical contribution of the paper.
We believe, however, that the category theory contributions are
far more interesting and relevant. All of these results are around the
well-known Geometry of Interaction (GoI) concept introduced originally by Girard
\cite{gir} in the late 1980's as an interpretation of linear logic. The ideas,
however, originate from and are directly related to a yet earlier
work \cite{acta} by the author on the axiomatization of flowchart schemes,
where the traced monoidal category axioms first appeared in an algebraic
context. Our category theory contributions are as follows:

\begin{enumerate}
\item We introduce a total trace on the monoidal subcategory
of $(\fdh ,\oplus )$ defined by isometries, which has previously been sought by
others \cite{scott1,scott2}.

\item We explain the role of the $Int$ construction for
traced monoidal categories \cite{tra} in turning a computation process
bidirectional or reversible.

\item We capture the phenomenon in (ii) above
by our own concept ``indexed monoidal algebra'' \cite{tur1}, which is an equivalent
formalism for certain regular self-dual compact closed categories.
\end{enumerate}

 Due to space limitations we have to assume familiarity with some
advanced concepts in category theory, namely traced monoidal categories
\cite{tra}, compact closed categories \cite{cc}, and the $Int$
construction that links these two types of symmetric monoidal
categories \cite{mcl} to each other. For brevity, by a
monoidal category we shall mean a symmetric monoidal one
throughout the paper.
\vspmmm
\section{Traced and compact closed monoidal categories}
  The following definition of (strict) traced monoidal categories uses the
terminology of \cite{tra}. Trace (called feedback
in \cite{acta}) in a monoidal category $\calc $ with unit object
$I$, tensor $\otimes $, and symmetries $c_{A,B}:A\otimes B\nyil B\otimes A$
is introduced as a left trace, i.e., an operation
$\calc (U\otimes A,U\otimes B)\nyil \calc (A,B)$.

\begin{definition}
A {\em trace\/} for a monoidal category $\calc $ is a family of functions
     \[ \truab : \calc (U\otimes A,U\otimes B)\nyil \calc (A,B) \]
natural in $A$ and $B$, dinatural in $U$, and satisfying the following three axioms:
\begin{description}
\item {\em vanishing:}
     \[ \trab ^{I}(f)=f\; ,\;\;\; \trab ^{U\otimes V}(g)=
     \trab ^{V}(\mathrm{Tr}_{V\otimes A,V\otimes B}^{U}(g)); \]
\item {\em superposing:}
    \[ \truab (f)\otimes g=\mathrm{Tr}_{A\otimes C,B\otimes D}^{U}(f\otimes g),
      \mbox{\ where\ } g:C\nyil D ; \]
\item {\em yanking:}
    \[ \mathrm{Tr}_{U,U}^U(c_{U,U})=1_U. \]
\end{description}
\end{definition}
We use the word {\em sliding\/} as a synonym for dinaturality in $U$.
When using the term {\em feedback\/} for
trace, the notation $\mathrm{Tr}$ changes to $\uparrow $ or $\ffny $, and we
simply write $\mathrm{Tr}^U$ ($\fny ^U$, $\ffny ^U$) for $\mathrm{Tr}_{A,B}^U$ whenever $A$
and $B$ are understood from the context. The reason for using three different
symbols for trace is the different nature of semantics associated with these symbols.

As it is customary in linear algebra,
we shall use the symbols $I$ and $0$ as ``generic'' identity (respectively, zero)
operators, provided that the underlying Hilbert space is understood from the context.
As a further technical simplification we shall be working with the
strict monoidal formalism, even though the monoidal category of Hilbert
spaces with the usual tensor product is not strict. It is known, cf.\ \cite{mcl}, that
every monoidal category is equivalent to a strict one.

\begin{definition}
A monoidal category $\calc $ is {\em compact closed\/} (CC, for short) if every
object $A$ has a left adjoint $A^*$ in the sense that there exist morphisms
$d_A:I\nyil A^*\otimes A$ (the unit map) and $e_A:A\otimes A^*\nyil I$ (the
counit map) for which the two composites below result in the identity
morphisms $1_A$ and $1_{A^*}$, respectively.
\begin{eqnarray*}
 A &\!\!\!\!=&\!\!\!\! A\otimes I\nyil _{1_A \otimes d_A}A\otimes
(A^*\otimes A) = (A\otimes A^*)\otimes A\nyil _{e_A\otimes 1_A}
I\otimes A = A, \\
A^* &\!\!\!\!=&\!\!\!\! I\otimes A^*\nyil _{d_A\otimes  1_{A^*}}(A^*\otimes A)
\otimes A^* =  A^*\otimes (A\otimes A^*)\nyil _{1_{A^*}\otimes e_A}
A^*\otimes I = A^*.
\end{eqnarray*}
\end{definition}
As it is well-known,
every CC category admits a so called {\em canonical trace\/} \cite{tra} defined
by the formula
\[ \truab f =(d_{U}\otimes 1_A)\circ (1_{U^*}\otimes f)\circ (e_{U^*}\otimes 1_B). \]
Notice that we write composition of morphisms ($\circ $) in a left-to-right order,
avoiding the use of ``;'', which some may find more appropriate. We do so in order
to facilitate a smooth transition from composition to matrix product in Section~4.
In the formula of canonical trace above we have made the additional silent
assumption that the involution $()^{*}$ is strict, so that  $U^{**}=U$ holds
for each object $U$. As it is known from \cite{seely}, this assumption can also
be made without loss of generality.

Recall from \cite{dag} that a {\em dagger monoidal category\/} is a monoidal category
$\calc $ equipped with an involutive, identity-on-objects contravariant functor
$^{\dagger }:\calc ^{op}\nyil \calc $ coherently preserving the symmetric monoidal
structure as specified in \cite{dag}. A {\em dagger compact closed category\/} is
a dagger monoidal category that is also compact closed, and such that the diagram
in Figure \ref{fig1} commutes for all objects $A$.

\begin{figure}
\centering
\includegraphics[scale=.55]{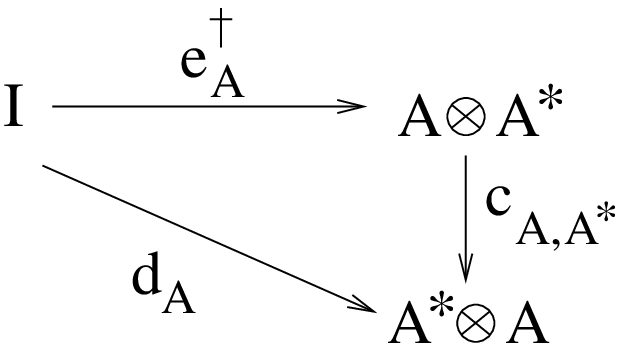}
\caption{Diagram for dagger compact closed categories}\label{fig1}
\end{figure}
\section{Monoidal vs.\ Turing automata}

Circuits and automata over an arbitrary monoidal category $M$ have been studied in
\cite{tcs,ic,sim,walt}. It was shown that the collection of such machines has the structure
of a monoidal category equipped with a natural feedback operation, which
satisfies the traced monoidal axioms, except for yanking. Moreover, sliding holds
in a weak sense, for isomorphisms only.

Let $A$ and $B$ be objects in $M$. An $M$-{\em automaton\/}
(circuit) $A\nyil B$ is a pair $(U,\alpha )$, where $U$ is a further object and $\alpha :
U\otimes A\nyil U\otimes B$ is a morphism in $M$.
If, for example, $M=(\set ,\times)$, then the pair $(U,\alpha )$
represents a deterministic Mealy automaton with states $U$, input $A$, and output $B$.
The structure of $M$-automata/circuits has been described as
a monoidal category $\mathrm{Circ}(M)$ with feedback in \cite{walt}.
This category was also shown to be freely generated by $M$.

In this paper we take a different approach to the study of monoidal automata.
We follow the method of \cite{tur} with the aim of constructing a {\em traced\/}
monoidal category as an adequate semantical structure
for these automata. One must not confuse this type of semantics with the meaning normally
associated with the category $\mathrm{Circ}(M)$ above, as they have seemingly very little in common.
A traced monoidal category indicates a {\em delay-free\/} semantics, as opposed to the
step-by-step {\em delayed\/} semantics suggested by $\mathrm{Circ}(M)$.
Moreover, the category that we are going to construct is not meant to be the quotient of
$\mathrm{Circ}(M)$ by the yanking identity, so as to turn it into a traced monoidal category
in the straightforward manner. Rather,
we define a brand new tensor and feedback (trace) on our $M$-automata, which are
analogous to the basic operations in iteration theories \cite{iter}. Regarding
the base category $M$, we shall assume an additional, so called additive tensor $\oplus $,
so that $\otimes $ distributes over $\oplus $. These two tensors will then be
``mixed and matched'' in the definition of tensor for $M$-automata,
providing them with an intrinsic Turing machine behavior.

The ``prototype'' of this construction, resulting in the CC category
of conventional Turing automata,
has been elaborated in \cite{tur1} using $M=(\rel ,\times ,+)$ as the base
category. This category was ideal as a template for the kind of construction we have
in mind, since it has a biproduct $+$ as the additive tensor and is self-dual compact
closed according to the multiplicative tensor $\times $.
Below we present the quantum counterpart of
this construction, working in the dagger compact closed category
of finite dimensional Hilbert spaces $(\fdh ,\otimes ,\oplus )$. More precisely,
the category $M$ above will
be the restriction of $\fdh $ to isometries as morphisms, which subcategory
is no longer compact closed and does not have a biproduct. \vspmmm
\section{Directed quantum Turing automata
}
In this section we present the construction outlined above, to obtain a strange
asymmetric model which does not yet qualify as a recognizable quantum computing device
in its own right. The model represents a Turing machine in which cells are interconnected
in a directed way, so that the control (tape head) always moves along interconnections
in the given fixed direction, should it be left or right. In other words, direction
is incorporated in the scheme-like  graphical syntax, rather than the semantics.
We use this model only as a stepping stone towards our real
objective, the (undirected) quantum Turing automaton described in Section~5.
\begin{definition}
A {\em directed quantum Turing automaton\/} is a quadruple \vspmini
   \[   T=(\calh ,\calk ,\call ,\tau ), \vspmmm \]
where $\calh $, $\calk $, and $\call $ are finite dimensional Hilbert spaces over
the complex field $\Complex $, and
   $\tau: \calh \otimes \calk \nyil \calh \otimes \call $
is an isometry in $\fdh $.
\end{definition}

Recall that an {\em isometry\/} between Hilbert spaces $\calh_1$ and $\calh _2$
is a linear map $\sigma :\calh _1\nyil \calh _2$ such that
$\sigma \circ \sigma ^\dagger =I$,
where $\sigma ^{\dagger}$ is the (Hilbert space) {\em adjoint\/} of $\sigma $.
Following the notation of general monoidal automata we write $T:\calk \nyil \call $,
and call the isometry $\tau $ the {\em transition operator\/} of $T$. Thus,
$T$ is the monoidal automaton $(\calh ,\tau ):\calk \nyil \call $. Sometimes we
simply identify $T$ with $\tau $, provided that the other parameters of $T$ are understood
from the context.

\begin{figure}[h]
\centering
\includegraphics[scale=.55]{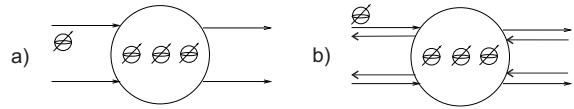}
\vspmmm
\caption{Two simple DQTA } \vspmmm \vspminni \label{fig2}
\end{figure}

The reader can obtain an intuitive understanding of the automaton $T$ from Figure \ref{fig2}a.
The state space $\calh $ is represented by a finite number of qubits (in our example 3),
while the control is a moving particle that moves from one of the input
interfaces (space $\calk $) to one of the output ones (space $\call $). It can
only move in the input $\nyil $ output direction, as specified by the operator $\tau $.
The number of input and output interfaces is finite. The control itself does not
carry any information, it is just moving around and changes the state of $T$.
In comparison with conventional Turing machines, the state of $T$ is the tape
contents of the
corresponding Turing machine, and the current state of the Turing machine is just an
interface identifier for $T$. For example, one can consider the DQTA in
Figure \ref{fig2}b as one tape cell of a Turing machine $TM$ having $2^3$ symbols in its tape
alphabet and only 2 states (2 left-moving and 2 right-moving interfaces, both input
and output). Correspondingly, $\calh $ is $8$-dimensional, while the dimension of
both $\calk $ and $\call $ is $4$. In motion,
if the control particle of $T$ resides on
the input interface labeled $(L,i)$ ($(R,i)$), then $TM$ is in state $i$ moving
to the left (respectively, right). The point is, however, that the automaton
$T$ need not represent just one cell, it could
stand for any finite segment of a Turing machine, in fact a Turing graph machine in the
sense of \cite{tur}. In our concrete example, a segment of $TM$ with $n$ tape cells
would have $3n$ qubits inside the circle of Figure \ref{fig2}b, but still the same $4+4$
interfaces.

An {\em isometric isomorphism\/}
$\sigma :\calh _1\nyil \calh _2$ ({\em unitary map}, if $\calh _1=\calh _2$)
is a linear operator such that both $\sigma $ and $\sigma ^{\dagger }$
are isometries. Two automata $T_i:(\calh _i,\tau _i):\calk \nyil \call $, $i=1,2$, are
{\em isomorphic}, notation $T_1\cong T_2$, if there exists an isometric isomorphism
$\sigma :\calh _1\nyil \calh _2$ for which \vspmini
\[ \tau _2 =(\sigma ^{\dagger }\otimes I_{\calk })
\circ \tau _1 \circ (\sigma \otimes I_{\call }). \vspmini \]
For simplicity, though, we shall work with
representatives, rather than equivalence classes of automata.

  Turing automata can be composed by the standard {\em cascade product\/} of monoidal
automata, cf.~\cite{ic,sim,walt}. If $T_1=(\calh _1,\tau _1): \call \nyil \calm $
and $T_2=(\calh_2, \tau _2):\calm \nyil \caln $ are directed quantum Turing automata
(DQTA, for short), then \vspmini
\[  T_1 \circ T_2 =(\calh _1\otimes \calh _2 ,\call ,\caln ,\tau ) \vspmini \]
is the automaton whose transition operator $\tau $ is \vspmini
\[ (\pi _{\calh _1,\calh _2}\otimes I_{\call })\circ (I_{\calh _2 }\otimes \tau _1)
 \circ (\pi _{\calh _2,\calh _1}\otimes I_{\calm })\circ (I_{\calh _1}\otimes \tau_2),
  \vspmini \]
where $\pi _{\calh ,\calk }$ is the symmetry $\calh \otimes \calk \nyil \calk \otimes \calh $
in $(\fdh ,\otimes )$.
As known from \cite{walt}, the cascade product of automata is compatible with
isomorphism, so that it is well-defined on isomorphism classes of DQTA.
The identity Turing automaton $1_{\calk }:\calk \nyil \calk $ has the
unit space $\Complex $ as its state space, and its transition operator is
simply $I_{\calk }$. The results in \cite{walt} imply that these data define a category
$\dqt $ over finite dimensional Hilbert spaces as objects, in which the morphisms are
isomorphism classes of DQTA.

Now let \vspmmm
\[  T_1=(\calh_1, \tau _1): \calk_1\nyil \call _1 \mbox{\ and\ }
   T_2=(\calh_2, \tau _2): \calk_2 \nyil \call _2 \vspmini
\]
be DQTA, and define $T_1\boxplus T_2$
to be the automaton over the state space $\calh _1\otimes \calh _2$ whose transition
operator \vspmini
  \[ \tau =\tau _1\boxplus \tau _2: (\calh_1\otimes \calh_2)\otimes (\calk_1\oplus
  \calk_2)\nyil(\calh_1\otimes \calh_2)\otimes (\call_1\oplus \call_2) \vspmini \]
acts as follows: $\tau \simeq \sigma _1\oplus \sigma _2$, where the morphisms \vspmini
   \[ \sigma _i :(\calh_1\otimes \calh_2)\otimes \calk _i\nyil (\calh_1\otimes \calh_2)
    \otimes \call_i ,\,\,\,\, i=1,2 \mbox{\ are:} \vspmmm \]
\[
 \sigma _1=(\pi _{\calh _1,\calh_2}\otimes I_{\calk_1})\circ (I_{\calh _2}\otimes
   \tau_1)\circ (\pi _{\calh _2,\calh_1}\otimes I_{\call_1}), \mbox{\ \ and\ \ }
   \sigma _2=I_{\calh _1}\otimes \tau _2.
\vspmini \]

In the above equations, $\oplus $ denotes the orthogonal sum of Hilbert spaces. Intuitively,
$\tau $ is the selective performance of {\em either\/} $\tau _1$ {\em or\/} $\tau _2$ on
the tensor space $\calh _1\otimes \calh_2 $. We say ``either or'', because the
interfaces of $T_1$ and $T_2$ are separated by $\oplus $, rather than $\otimes $.
The natural isomorphism $\simeq $ is
{\em distributivity\/} in the sense of \cite[Proposition~5.3]{abr1}.
It is clear that the operator $\tau_1\boxplus \tau_2$ is an isometry, so that the
operation $\boxplus $ is well-defined. We call this operation the {\em Turing tensor\/}.
The Turing tensor is also associative, up to natural isomorphism, of course.

The symmetries $\calk \oplus \call \nyil \call \oplus \calk $ associated with $\boxplus $
are the ``single-state'' Turing automata whose transition operator is the permutation
\vspmini
\[ \kappa _{\calk ,\call}=\!\!\!
\begin{array}{cc}
  \  & \begin{array}{cc}
       \call & \calk
       \end{array} \\
\begin{array}{c}
    \calk \\
    \call
\end{array}
& \!\!\!\!\!
\left(
\begin{array}{cc}
0 & I \\
I & 0
\end{array}
\right)
\end{array}
\!\!: (\Complex \otimes )(\calk \oplus \call )\nyil
  (\Complex \otimes )(\call \oplus \calk )
. \vspmmm \]
Along the lines of \cite{walt} it is routine to check that $\boxplus $ is also compatible
with isomorphism of automata,
and $(\dqt ,\boxplus )$ becomes a monoidal category in this way.

Our third basic operation on DQTA is feedback. Feedback follows the scheme of iteration
in Conway matrix theories \cite{iter}, using an appropriate star operation.  Let
$T:\calu \oplus \calk \nyil \calu \oplus \call $ be a DQTA having \vspmmm
\[ \tau : \calh \otimes (\calu \oplus \calk )\nyil \calh \otimes (\calu \oplus \call )
\vspmmm \]
as its transition operator. Then $\fny ^{\calu }T :\calk \nyil \call $ is the automaton
over (the {\em same\/} space) $\calh $ specified as follows. Consider the matrix of $\tau $:
\vspmm
\[
\begin{array}{cc}
  \  & \begin{array}{cc}
       \calh \otimes \calu & \calh \otimes \call
       \end{array} \\
\begin{array}{c}
    \calh \otimes \calu \\
    \calh \otimes \calk
\end{array}
& \!\!\!\!
\left(
\begin{array}{cccccc}
\ & \tau _A & & & \tau _B &\  \\
\ & \tau _C & & & \tau _D &\
\end{array}
\right)
\end{array}
\vspmmm \]
according to the biproduct decomposition \vspmmm
    \[ \tau = \langle [\tau_A ,\tau _C],[\tau _B,\tau _D ]\rangle ,\vspmmm \]
where
$[\_\,,\_\,]$ stands for coproduct and $\langle \_\,,\_\,\rangle $ for product.
The transition operator of $\fny ^{\calu }T$ is defined by the {\em Kleene
formula}:
\vspmini
\begin{equation}
  \fny ^{\calu }\tau =\lim _{n\nyil \infty} (\tau _D +\tau _C \circ\tau _A^{*n}
     \circ \tau _B) .\vspmini
\end{equation}
In the Kleene formula, $\tau _A^{*n}=\sum _{i=0}^n \tau _A^i$, where $\tau _A^0=I$
and $\tau _A^{i+1}=\tau _A^i\circ \tau_A$. In other words, $\tau _A^{*n}$ is the $n$-th
approximation of $\tau _A$'s {\em Neumann series\/} well-known in operator
theory.
The correctness of the above definition is contingent upon the existence of the
limit and also on the resulting operator being an isometry. For these two
conditions we need to make a short digression, which will also clarify the
linear algebraic background.

Let $\iso $ denote the subcategory of $\fdh $ having only isometries as its morphisms.
Notice that $(\iso ,\otimes)$ is no longer
compact closed, even though the multiplicative tensor $\otimes $ is still intact in it.
(The duals are gone.) This tensor, however, does not concern us at the moment.
Consider $\oplus $ as an additive tensor in $\iso $:
\vspmini
\[\tau _1\oplus \tau _2=\langle [\tau _1,0],[0,\tau _2]\rangle
 \mbox{\ \ for all isometries $\tau _i:\calh _i\nyil \calk _i$, $i=1,2$.} \vspmini \]
Clearly, $\tau _1\oplus \tau _2$ is an isometry. The new additive unit (zero) object
is the zero space $\calz $. With
the additive symmetries $\kappa _{\calh ,\calk }:\calh \oplus \calk \nyil \calk \oplus \calh $,
$(\iso ,\oplus)$ again qualifies as a monoidal category.
The biproduct property of $\oplus $ is lost, however. Nevertheless,
one may attempt to define a trace operation $\fny ^{\calu } \tau $ in $\iso $ by the Kleene
formula (1), where $\tau :\calu \oplus \calk \nyil \calu \oplus \call $.
(Cut $\calh \otimes $ in the matrix of $\tau $.)

Since the Kleene formula does not appear to be manageable, we first redefine $\fny ^\calu \tau $
and prove the equivalence of the two definitions later. Let \vspmmm
\begin{equation}
  \ffny ^\calu \tau =\tau _D +\tau _C\circ (I-\tau _A)^+\circ \tau _B, \vspmmm
\end{equation}
where $(\,)^+$ denotes the {\em Moore-Penrose generalized inverse\/}
of linear operators. Recall, e.g., from
\cite{isr} that the Moore-Penrose inverse (MP inverse, for short) of an arbitrary operator
$\sigma :\calh \nyil \calk $ is the unique operator $\sigma ^+:\calk \nyil \calh $ satisfying
the following two conditions:
\begin{enumerate}
\item    $\sigma \circ \sigma ^+ \circ \sigma =\sigma $, and $\sigma ^+\circ \sigma \circ \sigma ^+=
     \sigma ^+$;
\item   $\sigma \circ \sigma ^+$ and $\sigma ^+\circ \sigma $ are Hermitian.
\end{enumerate}

 The connection between formulas (1) and (2) is the following. If the Neumann series
$\tau _A^*$ converges, then $(I-\tau _A)$ is invertible and \vspmmm
\[ \tau _A^*=(I-\tau _A)^{-1}= (I-\tau _A)^+. \vspmmm\]
We know that $\|\tau _A\|\leq 1$, where $\|\,\|$ denotes the operator norm.
($\tau $ is an isometry.) Therefore the Kleene formula needs an explanation only if $\|\tau _A\|=1$.
In that case, even if $(I-\tau _A)$ is invertible, $\tau _A^*$ may not converge.

  Just as the Kleene formula in computer science, the expression on the right-hand side of
equation (2) is well-known and frequently used in linear algebra. For a block matrix \vspmmm
\[           M=\left( \begin{array}{cc}
                    A & B \\
                    C & D
                   \end{array}
               \right), \vspmmm
\]
where $A$ is square, the matrix $D-CA^+B$ is called the
{\em Schur complement\/} of $A$ on $M$, denoted $A/M$.
Cf., e.g., \cite{isr}. Observe that, under the assumption $\calk =\call$, \vspmmm
\[
\ffny ^{\calu }\tau =I-(I-\tau _A)/(I-\tau). \vspmmm
\]
For this reason we call $\ffny ^{\calu }\tau $ the {\em Schur I-complement\/} of $\tau _A$
on $\tau $, and write $\ffny ^{\calu }\tau =\tau _A\bckl \tau $.

\begin{theorem}\label{4.1}
The operator $\tau _A\bckl \tau $ is an isometry.
\end{theorem}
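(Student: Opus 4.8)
The plan is to verify the defining identity of an isometry, $(\tau_A\bckl\tau)^{\dagger}\circ(\tau_A\bckl\tau)=I$, by expanding it into the four blocks of $\tau$ and then collapsing the Moore--Penrose terms. For transparency of the linear algebra I write operator products in the ordinary right-to-left order (translation to the paper's left-to-right composition is routine), so that $\tau_A\bckl\tau=\ffny^{\calu}\tau=\tau_D+\tau_B(I-\tau_A)^{+}\tau_C$. First I would record the content of ``$\tau$ is an isometry'' as the column-orthonormality relations
\[ \tau_A^{\dagger}\tau_A+\tau_B^{\dagger}\tau_B=I,\quad \tau_A^{\dagger}\tau_C+\tau_B^{\dagger}\tau_D=0,\quad \tau_C^{\dagger}\tau_C+\tau_D^{\dagger}\tau_D=I, \]
together with the adjoint $\tau_C^{\dagger}\tau_A+\tau_D^{\dagger}\tau_B=0$ of the middle one.

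Next I would substitute these into $(\tau_A\bckl\tau)^{\dagger}\circ(\tau_A\bckl\tau)$, using the first and third relations to replace $\tau_B^{\dagger}\tau_B$ and $\tau_D^{\dagger}\tau_D$, and the off-diagonal relations to replace $\tau_D^{\dagger}\tau_B$ and $\tau_B^{\dagger}\tau_D$ by $-\tau_C^{\dagger}\tau_A$ and $-\tau_A^{\dagger}\tau_C$. After cancellation the whole expression organizes as
\[ (\tau_A\bckl\tau)^{\dagger}\circ(\tau_A\bckl\tau)=I+\tau_C^{\dagger}\,\Xi\,\tau_C, \]
where $\Xi$ is an operator on $\calu$ assembled from $\tau_A$, $(I-\tau_A)^{+}$ and their adjoints. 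As a sanity check, if $I-\tau_A$ were invertible one sandwiches $\Xi$ between $(I-\tau_A)^{\dagger}$ on the left and $(I-\tau_A)$ on the right and finds, after substituting $\tau_A=I-(I-\tau_A)$, that it collapses identically to $0$; so the statement is immediate in that case. The difficulty is entirely that $(I-\tau_A)^{+}$ is only a generalized inverse, so with $N=I-\tau_A$ the products $NN^{+}$ and $N^{+}N$ are projections rather than $I$, and the cancellation must be controlled on the correct subspaces.

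The crux, and the step I expect to be the main obstacle, is to prove that $N=I-\tau_A$ is range-Hermitian, i.e.\ $\ker(I-\tau_A)=\ker(I-\tau_A^{\dagger})$, equivalently $NN^{+}=N^{+}N$. Here the isometry structure enters: if $\tau_A u=u$, the first relation gives $\|\tau_A u\|^{2}+\|\tau_B u\|^{2}=\|u\|^{2}$, forcing $\tau_B u=0$, and then $\tau_A^{\dagger}\tau_A u=(I-\tau_B^{\dagger}\tau_B)u=u$ yields $\tau_A^{\dagger}u=u$; the reverse inclusion follows either from the estimate $\|\tau_A u-u\|^{2}=-\|\tau_B u\|^{2}\le 0$ or simply from $\dim\ker(I-\tau_A)=\dim\ker(I-\tau_A^{\dagger})$ in finite dimension. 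Once $N$ is range-Hermitian, $N^{+}$ and $(N^{\dagger})^{+}$ commute with the orthogonal projection $P=NN^{+}=N^{+}N$ onto $R(N)=R(N^{\dagger})=(\ker N)^{\perp}$ and are fixed by it, and this is exactly what makes $\Xi$ collapse: the same cancellation as in the invertible case goes through modulo $P$, leaving $\Xi=-(I-P)$, the orthogonal projection onto $\ker(I-\tau_A)$.

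It then remains to show that the surviving term $\tau_C^{\dagger}\,\Xi\,\tau_C=-\tau_C^{\dagger}(I-P)\tau_C$ vanishes, i.e.\ that $R(\tau_C)$ is orthogonal to $\ker(I-\tau_A)$. This is delivered by the adjoint off-diagonal relation: for $u\in\ker(I-\tau_A)$ we already have $\tau_A u=u$ and $\tau_B u=0$, so $0=(\tau_C^{\dagger}\tau_A+\tau_D^{\dagger}\tau_B)u=\tau_C^{\dagger}u$, whence $(I-P)\tau_C=0$ and $\tau_C^{\dagger}\Xi\tau_C=0$. Combining the three steps gives $(\tau_A\bckl\tau)^{\dagger}\circ(\tau_A\bckl\tau)=I$, so $\tau_A\bckl\tau$ is an isometry. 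The entire weight of the argument sits on the range-Hermitian property of $I-\tau_A$; granting it, the remainder is the bookkeeping of the invertible computation carried out modulo the projection $P$.
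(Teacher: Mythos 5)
Your proof is correct, and its mathematical core is the same as the paper's: both arguments rest on the observation that the isometry structure of $\tau$ forces the kernel of $I-\tau_A$ to decouple from everything else (your $\tau_B u=0$, $\tau_A^{\dagger}u=u$, $\tau_C^{\dagger}u=0$ for $u\in\ker(I-\tau_A)$ is exactly the paper's claim that all blocks with superscript $\caln$ vanish in the kernel-on-top decomposition), together with a direct verification of the isometry identity from the four block relations. Where you genuinely diverge is in the handling of non-invertibility. The paper splits $\calu=\caln\oplus\calu_0$, argues that one may assume without loss of generality that $I-\tau_A$ is invertible, and then computes with an honest inverse; it explicitly notes that this reduction is a unitary similarity whose legitimacy leans on the sliding property established only later (Theorem~\ref{4.4}). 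You instead keep $(I-\tau_A)^{+}$ symbolic throughout, prove from scratch that $I-\tau_A$ is range-Hermitian (the paper asserts this but cites it rather than deriving it), and absorb the failure of invertibility into the single projection $P=NN^{+}=N^{+}N$, showing the residual term $-\tau_C^{\dagger}(I-P)\tau_C$ dies because $R(\tau_C)\perp\ker(I-\tau_A)$. I checked the bookkeeping you left implicit: with $N=I-\tau_A$ range-Hermitian one indeed gets $\Xi=-I-N^{+}+NN^{+}-(N^{+})^{\dagger}+(N^{+})^{\dagger}N^{\dagger}+(N^{+})^{\dagger}NN^{+}+(N^{+})^{\dagger}N^{\dagger}N^{+}-(N^{+})^{\dagger}N^{\dagger}NN^{+}=-(I-P)$, using $(N^{\dagger})^{+}P=(N^{\dagger})^{+}$ and $PN^{+}=N^{+}$. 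Your version buys a basis-free argument with no \emph{without loss of generality} step and no forward reference to sliding; the paper's version buys a shorter final computation at the price of the reduction. For a written-up proof you should display the $\Xi$ computation explicitly, since that is where a reader would otherwise have to take your word.
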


\begin{proof}
Isolate the kernel $\caln $ of $(I-\tau _A)$, and let $\calu _0$ be the
orthogonal complement \cite{lin} of $\caln $ on $\calu $. The matrix of $(I-\tau _A)$ in
this breakdown is \vspmm
\begin{equation}
         I-\tau _A=\!\!
\begin{array}{cc}
  \  & \begin{array}{ccc}
       \caln & \ & \calu _0
       \end{array} \\
\begin{array}{c}
    \caln \\
    \calu _0
\end{array}
&
\left(
\begin{array}{cc}
0 & 0 \\
-\tau_A^{\caln } & I-\tau _A^0
\end{array}
\right)
\end{array}. \vspmmm
\end{equation}
Put this matrix (rather, $I-(I-\tau _A)$) in the top left corner of $\tau $: \vspmmm \vspminni
\[
\begin{array}{cc}
  \  & \begin{array}{ccccc}
       \caln & & \calu _0 & & \call
       \end{array} \vspace*{1ex} \\
\begin{array}{c}
    \caln \vspace*{1ex} \\
    \calu _0 \vspace*{1ex} \\
    \calk
\end{array}
&
\left(
\begin{array}{ccccc}
I & & 0 & & \tau _B^{\caln } \vspace*{1ex}\\
\tau _A^{\caln } & & \tau _A^0 & & \tau _B^0 \vspace*{1ex}\\
\tau _C^{\caln } & & \tau _C^0 & & \tau _D
\end{array}
\right)
\end{array}. \vspmmm
\]
Since $\tau $ is an isometry (regardless of its concrete orthogonal representation
as a matrix operator),
all entries in the above block matrix with
superscript $\caln $ must be $0$. Consequently, $(I-\tau_A^0)$ is invertible and
$\tau _A\bckl \tau =\tau _A^0\bckl
\tau _0$, where $ \tau _0:\calu _0\oplus \calk \nyil \calu _0\oplus \call $
is the restriction of $\tau $ to the bottom right $2\times 2$ corner.
Indeed, \vspmmm
\[
\left( \begin{array}{ccc}
                    0 & & 0 \\
                    0 & & I-\tau _A^0
                   \end{array}
               \right)^{+} =
\left( \begin{array}{ccc}
                    0 & & 0\\
                    0 & & (I-\tau _A^0)^{-1}
                   \end{array}
               \right),
\vspmmm
\]
so that \vspmini
\[
\tau _C\circ (I-\tau _A)^+\circ \tau _B=\tau _C^0\circ (I-\tau _A^0)^{-1}\circ \tau _B^0.
\vspmini
\]
It turns out from the above discussion that $(I-\tau _A)$ is {\em group invertible\/}
and {\em range-Hermitian}, cf.~\cite{isr,bern}. Therefore the MP inverse of $(I-\tau _A)$
coincides with its Drazin inverse, which is the group generalized inverse of this operator.
Cf.\ again \cite{isr,bern}. It follows that we can assume, without loss of generality, that
$(I-\tau _A)$ is invertible. Note that (3) is only a unitary similarity,
therefore the sliding axiom is needed to make this argument correct. Cf.\ Theorem~\ref{4.4} below.
For better readability, replace the symbols $\tau _A$, $\tau _B$,
$\tau _C$, and $\tau _D$ by $A$, $B$, $C$, and $D$, respectively.
Furthermore, ignore the composition symbol $\circ $ as if we were dealing with ordinary
matrix product. Then we have: \vspmini
\[
   \left(
\begin{array}{cc}
      A & B \\
      C & D
\end{array}
\right)
\left(
\begin{array}{cc}
      \daga & \dagc\\
      \dagb & \dagd
\end{array}
\right)
=
\left(
\begin{array}{cc}
      I & 0 \\
      0 & I
\end{array}
\right).
\vspmini
\]
The following four matrix equations are derived: \vspmini
\begin{eqnarray}
A\daga +B\dagb &=&I, \\
A\dagc +B\dagd &=&0, \\
C\daga +D\dagb &=&0, \\
C\dagc +D\dagd &=&I. \vspmini
\end{eqnarray}
We need to show that \vspmmm
\[
  (D+C(I-A)^{-1}B)(\dagd +\dagb (I-\daga )^{-1}\dagc )=I. \vspmini \]
The product on the left-hand side yields: \vspmini
\[
  D\dagd +D\dagb (I-\daga )^{-1}\dagc +C(I-A)^{-1}B\dagd
    +C(I-A)^{-1}B\dagb (I-\daga )^{-1}\dagc .
\vspmini
\]
By (5) and (6) this is equal to: \vspmini
\[
 D\dagd -C\daga (I-\daga )^{-1}\dagc -C(I-A)^{-1}A\dagc
          +C(I-A)^{-1}B\dagb (I-\daga )^{-1}\dagc , \vspmini
\]
which is further equal to $D\dagd +CQ\dagc $, where \vspmini
\[ Q\!=\!
(I-A)^{-1}B\dagb (I-\daga )^{-1}\!-\daga (I-\daga )^{-1}\!-(I-A)^{-1}A.
\vspmini
\]
According to (7) it is sufficient to prove that $Q=I$.
A couple of equivalent transformations follow.
Multiply both sides of $Q=I$ by $(I-A)$ from the left:
\begin{eqnarray*}
   B\dagb (I-\daga )^{-1}-(I-A)\daga (I-\daga )^{-1}-A&=& I-A, \\
   B\dagb (I-\daga )^{-1}-(I-A)\daga (I-\daga )^{-1}&=&I.
\vspmmm \vspmini
\end{eqnarray*}
Multiply by $(I-\daga )$ from the right:
\begin{eqnarray*}
B\dagb -(I-A)\daga &=&I-\daga , \\
B\dagb +A\daga &=&I.
\vspmmm \vspminni
\end{eqnarray*}
The result is equation (4), which is given. The proof is now complete.
\end{proof}

\begin{lemma}\label{4.2}
Let $\tau :\calu \oplus \calv \oplus \calk \nyil \calu \oplus \calv \oplus \call $ be an
isometry defined by the matrix
\[
\left(
\begin{array}{cc}
     \mbox{{\Large $M$}}
&
   \begin{array}{c}
      B_1 \\
      B_2
   \end{array}
\\
\begin{array}{cc}
  C_1 & C_2
\end{array}
&
D
\end{array}
\right),
\mbox{\ where\ }
M=\left(
\begin{array}{cc}
P & Q \\
R & S
\end{array}
\right). \]
If $I-(P\bckl M)=I-(S+R(I-P)^+Q)$ is invertible, then
\[ \ffny ^{\calv }(\ffny ^{\calu }\tau )=\ffny ^{\calu\oplus\calv}\tau.\]
\end{lemma}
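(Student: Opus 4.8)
The plan is to unfold both sides into explicit block operators and reduce the claim to a single identity about the generalized inverse of $I-M$. First I would compute the inner feedback. In the $\calu$-feedback grouping of $\tau$ the $A$-block is $P$, the $B$-block is $(Q\;B_1)$, the $C$-block is $\binom{R}{C_1}$ and the $D$-block is $\left(\begin{smallmatrix}S&B_2\\C_2&D\end{smallmatrix}\right)$, so the defining formula for $\ffny^{\calu}$ gives

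\[ \ffny^{\calu}\tau=\begin{pmatrix} S+R(I-P)^+Q & B_2+R(I-P)^+B_1 \\ C_2+C_1(I-P)^+Q & D+C_1(I-P)^+B_1\end{pmatrix}=:\begin{pmatrix} S' & B_2' \\ C_2' & D'\end{pmatrix}, \]

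whose $\calv$-block is exactly $S'=P\bckl M$. Applying the definition once more, now feeding back $\calv$, yields $\ffny^{\calv}(\ffny^{\calu}\tau)=D'+C_2'(I-S')^+B_2'$, and since $I-S'$ is invertible by hypothesis $(I-S')^+=(I-S')^{-1}$. On the other side, $\ffny^{\calu\oplus\calv}\tau=D+(C_1\;C_2)(I-M)^+\binom{B_1}{B_2}$. Cancelling the common summand $D$, the lemma reduces to the single identity

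\[ (C_1\;\,C_2)\,(I-M)^+\binom{B_1}{B_2}=C_1(I-P)^+B_1+C_2'\,(I-S')^{-1}B_2'. \]

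To handle the Moore--Penrose inverse $(I-M)^+$ I would first reduce to the case where $I-P$ is invertible, exactly as in the proof of Theorem~\ref{4.1}. Isolate $\caln_P=\ker(I-P)$ and split $\calu=\caln_P\oplus\calu_0$. Because $\tau$ is an isometry, the argument used in Theorem~\ref{4.1} forces every $\caln_P$-indexed off-diagonal entry of $\tau$ to vanish, so $\tau=1_{\caln_P}\oplus\tau_0$ with $\tau_0$ an isometry on $\calu_0\oplus\calv\oplus\calk\nyil\calu_0\oplus\calv\oplus\call$ and $I-P$ invertible on $\calu_0$; the sliding axiom (Theorem~\ref{4.4}) legitimizes this unitary restriction, just as in Theorem~\ref{4.1}. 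Under this decoupling $\ffny^{\calu}\tau=\ffny^{\calu_0}\tau_0$, and since $(I-P)^+$ is supported on $\calu_0$ the Schur complement $S'$ is unchanged. The hypothesis that $I-S'$ is invertible, together with invertibility of $I-P$ on $\calu_0$, then forces $I-M_0$ to be invertible (through the block-triangular factorization of $I-M_0$ across its Schur complement $I-S'$), whence $\ker(I-M)=\caln_P$ and $\ffny^{\calu\oplus\calv}\tau=\ffny^{\calu_0\oplus\calv}\tau_0$ by the same reduction applied to the combined feedback. Both sides thus descend to $\tau_0$, so we may assume from the outset that $I-P$, and hence $I-M$, is invertible and that every $(\,)^+$ above is a genuine inverse.

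It then remains to verify the displayed identity for honest inverses, which is the classical Schur-complement quotient (Crabtree--Haynsworth) formula. Writing $\alpha=(I-P)^{-1}$ and using that the Schur complement of the $(I-P)$-block in $I-M$ is precisely $T:=I-S'$, the standard $2\times2$ block inverse reads

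\[ (I-M)^{-1}=\begin{pmatrix}\alpha+\alpha Q\,T^{-1}R\alpha & \alpha Q\,T^{-1}\\ T^{-1}R\alpha & T^{-1}\end{pmatrix}. \]

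Substituting this into $(C_1\;C_2)(I-M)^{-1}\binom{B_1}{B_2}$ and collecting terms regroups the off-diagonal pieces into $C_2'=C_2+C_1\alpha Q$ and $B_2'=B_2+R\alpha B_1$, giving exactly $C_1\alpha B_1+C_2'\,T^{-1}B_2'$, i.e.\ the right-hand side; this is a direct computation I would not belabor. The genuine obstacle is the reduction paragraph: one must be certain that isolating $\ker(I-P)$ decouples $\caln_P$ completely and that this same subspace is what gets removed when the combined feedback $\ffny^{\calu\oplus\calv}$ is reduced. This is where the isometry hypothesis and the sliding axiom do the real work, and where the invertibility of $I-(P\bckl M)$ is essential to rule out a larger kernel of $I-M$.
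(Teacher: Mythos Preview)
Your proof is correct and follows essentially the same route as the paper: reduce to the case where $I-P$ is invertible via the kernel-on-top argument of Theorem~\ref{4.1}, then invoke the Banachiewicz $2\times 2$ block inverse formula for $(I-M)^{-1}$ and check the resulting identity. You have simply made explicit the computations the paper leaves to the reader, including the argument that invertibility of $I-P$ together with the hypothesis on $I-S'$ forces $I-M$ itself to be invertible.
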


\begin{proof}
Using the kernel-on-top representation of operators as explained under
Theorem~\ref{4.1}, we can assume (without loss of generality) that
$I-P$ is also invertible. Then the statement follows from the Banachiewicz
block inverse formula \cite [Proposition~2.8.7]{bern}:
\[
\left(
\begin{array}{cc}
  A & B \\
  C & D
\end{array}
\right)^{-1} =
\left( \!\!\!\!
\begin{array}{cc}
  A^{-1}\!+\!A^{-1}B(D\!-\!CA^{-1}B)^{-1} & \!\!-A^{-1}B(D\!-\!CA^{-1}B)^{-1} \\
  -(D-CA^{-1}B)^{-1}CA^{-1}       & \!\!(D-CA^{-1}B)^{-1}
\end{array}
\!\!\!
\right)\!,
\]
using $A=I-P$, $B=-Q$, $C=-R$, and $D=I-S$. Computations are left to the reader.
\end{proof}

Note that the Banachiewicz formula does not hold true for the MP or the Drazin
inverse of the given block matrix when $A^{-1}$ and $(D-CA^{-1}B)^{-1}$ are
replaced on the right-hand side by $A^+$ and $(D-CA^{+}B)^+$, respectively,
even if one of these
square matrices is invertible. There are appropriate block inverse formulas
for generalized inverses, cf.~\cite{bern}, but these formulas are extremely
complicated and are of no use for us.
\begin{lemma}\label{4.3}
Let $\tau :\calu \oplus \calv \oplus \calk \nyil \calu \oplus \calv \oplus \call $ be an
isometry as in Lemma \ref{4.2}. If $P\bckl M=I$, then \vspmini
\[ \ffny ^{\calv }(\ffny ^{\calu }\tau )=\ffny ^{\calu\oplus\calv}\tau .
\vspmini
\]
\end{lemma}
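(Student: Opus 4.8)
The plan is to build on Theorem~\ref{4.1}: the once-traced operator $\tau'=\ffny^{\calu}\tau$ is again an isometry, and the hypothesis $P\bckl M=I$ will make it so degenerate that both iterated traces collapse to the same operator. Exactly as in the proof of Lemma~\ref{4.2}, I would first invoke the kernel-on-top reduction of Theorem~\ref{4.1} to assume, without loss of generality, that $I-P$ is invertible; since this is only a unitary similarity on the $\calu$-summand, the sliding property makes the reduction legitimate and transforms both sides of the asserted identity consistently. Expanding the Schur $I$-complement block by block then exhibits $\tau':\calv\oplus\calk\nyil\calv\oplus\call$ as the operator matrix with diagonal entries
\[ \tau'_A=S+R(I-P)^{-1}Q=P\bckl M \qquad\text{and}\qquad \tau'_D=D+C_1(I-P)^{-1}B_1, \]
and off-diagonal entries $\tau'_B=B_2+R(I-P)^{-1}B_1$ and $\tau'_C=C_2+C_1(I-P)^{-1}Q$. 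The entire force of the hypothesis is the single equation $\tau'_A=I$.

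The decisive observation is what the isometry property of $\tau'$ now forces. Since $\tau'$ is an isometry (Theorem~\ref{4.1}) whose $\calv$-to-$\calv$ corner $\tau'_A$ equals the identity, a short computation with the defining relation $\tau'\circ(\tau')^{\dagger}=I$ forces both off-diagonal blocks to vanish, $\tau'_B=0$ and $\tau'_C=0$; spelled out, these are the relations
\[ B_2=-R(I-P)^{-1}B_1 \qquad\text{and}\qquad C_2=-C_1(I-P)^{-1}Q. \]
The left-hand side of the lemma is then immediate: because $\tau'_A=I$ we have $(I-\tau'_A)^{+}=0$, so $\ffny^{\calv}\tau'=\tau'_D+\tau'_C\circ(I-\tau'_A)^{+}\circ\tau'_B=\tau'_D=D+C_1(I-P)^{-1}B_1$.

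It remains to match the right-hand side $\ffny^{\calu\oplus\calv}\tau=D+(C_1,C_2)\circ(I-M)^{+}\circ\binom{B_1}{B_2}$, and this is where the main obstacle sits: in contrast to Lemma~\ref{4.2}, the Schur complement of $I-P$ in $I-M$ is now $I-(P\bckl M)=0$, so $I-M$ is genuinely singular and the Banachiewicz block-inverse formula is unavailable. The way around it is that one never needs the full generalized inverse, only its effect between the row $(C_1,C_2)$ and the column $\binom{B_1}{B_2}$, and the two vanishing relations above annihilate exactly the part of $(I-M)^{+}$ that is hard to compute. Concretely, $\tau'_C=0$ says that the image of $(C_1,C_2)$ is contained in $\mathrm{range}(I-M)$ (one checks directly that $(I-M)$ sends the vector supported on $\calu$ by $(I-P)^{-1}C_1$ to that image), while $\tau'_B=0$ says that the functional $\binom{B_1}{B_2}$ vanishes on $\ker(I-M)=\{(u,v)\in\calu\oplus\calv:(I-P)u=Rv\}$.

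Granting these two facts the computation closes at once. Since the relevant image lies in $\mathrm{range}(I-M)$, pick any preimage $v$ under $I-M$; the Moore--Penrose identity $(I-M)^{+}(I-M)=\Pi$, with $\Pi$ the orthogonal projection onto $\ker(I-M)^{\perp}$, shows that $(I-M)^{+}$ applied to that image equals $\Pi v=v-\Pi_{\ker(I-M)}v$. Post-composing with $\binom{B_1}{B_2}$ and using that it annihilates $\ker(I-M)$, the projection term drops out and the value is independent of the chosen preimage; evaluating on the convenient $\calu$-supported preimage leaves exactly $C_1(I-P)^{-1}B_1$. Hence $\ffny^{\calu\oplus\calv}\tau=D+C_1(I-P)^{-1}B_1=\tau'_D=\ffny^{\calv}(\ffny^{\calu}\tau)$, as claimed. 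I expect the only delicate points to be justifying the reduction to invertible $I-P$ via sliding (inherited from Theorem~\ref{4.1}) and verifying the two membership/annihilation statements, for which the range-Hermiticity and group-invertibility of $I-M$ established under Theorem~\ref{4.1} guarantee that $(I-M)^{+}$ is well enough behaved for the projection argument to apply; the remaining steps are routine block algebra.
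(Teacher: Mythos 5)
Your proof is correct, and after the common opening moves it diverges from the paper's argument in an interesting way. Both proofs begin identically: reduce to $I-P$ invertible via the kernel-on-top similarity, expand $\tau'=\ffny^{\calu}\tau$ block by block, and exploit the isometry relation $\tau'\circ(\tau')^{\dagger}=I$ together with $\tau'_A=P\bckl M=I$ to force $\tau'_B=\tau'_C=0$ (in the paper these are the scalar relations (8)). From there the paper reduces to $\dim\calu=\dim\calv=1$ with real entries (via an induction it does not write out), represents $I-M$ as an explicit singular Hermitian $2\times 2$ matrix, diagonalizes it unitarily to obtain $(I-M)^{+}$ in closed form, and verifies the identity by direct substitution. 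You instead never compute $(I-M)^{+}$ at all: the two vanishing relations say precisely that the row $(C_1,C_2)$ and the column $\binom{B_1}{B_2}$ both factor through $I-M$, as $(C_1,C_2)=w\circ(I-M)$ with $w=(C_1(I-P)^{-1},0)$ and $\binom{B_1}{B_2}=(I-M)\circ z$ with $z=\binom{(I-P)^{-1}B_1}{0}$, whence
\[
(C_1,C_2)\circ(I-M)^{+}\circ\textstyle\binom{B_1}{B_2}
= w\circ(I-M)(I-M)^{+}(I-M)\circ z
= w\circ(I-M)\circ z
= C_1(I-P)^{-1}B_1,
\]
using only the defining identity of the Moore--Penrose inverse. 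This buys you a proof valid in arbitrary dimension over $\Complex$, with no induction, no reality assumption, and no appeal to range-Hermiticity; it is in substance the kernel-image-trace mechanism of Malherbe--Scott--Selinger that the paper itself points to after Theorem~\ref{4.4} as an alternative route. Two cosmetic slips: your description of $\ker(I-M)$ should read $(I-P)u=Qv$ (together with $Ru=(I-S)v$) rather than $(I-P)u=Rv$, and the range/kernel bookkeeping in your penultimate paragraph is stated in transposed form --- but since the argument closes through the double factorization displayed above, neither affects correctness. You should still record explicitly that the hypothesis $P\bckl M=I$ and both sides of the claimed identity are preserved under the initial unitary similarity, a point the paper also leaves to sliding.
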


\begin{proof}
Again, we can assume that $I-P$ is invertible. To keep the computation simple, let
$\calu $ and $\calv $ both be 1-dimensional. This, too, can in fact be assumed
without loss of generality, if one uses an appropriate induction argument. The
induction, however, can be avoided at the expense of a more advanced matrix
computation. Thus,
\[
\tau =\left(
\begin{array}{ccccc}
  p & & q & & u_1 \\
  r & & s & & u_2 \\
  v_1\lny & & v_2\lny & & D
\end{array}
\right),
\vspmini
\]
where $u_i$ and $(v_i\lny )$, $i=1,2$ are row and column vectors, respectively.
To simplify the computation even further, let the numbers $p,q,r,s$ be real.
The $2\times 2$ matrix $I-M$ is singular and range-Hermitian, therefore it is
Hermitian (only because the numbers are real, see \cite[Corollary 5.4.4]{bern}),
so that it must be of the form
\vspmini
\[
   I-M=\left(
\begin{array}{ccc}
  a & & b \\
  b & & b^2/a
\end{array}
   \right)
\vspmini
\]
for some real numbers $a,b$ with $a=1-p\neq 0$. Then
\vspmini
\[
    \ffny _{\calu }\tau = \left(
\begin{array}{ccc}
  c & & u \\
  v\lny & & D'
\end{array}
\right),
\vspmmm
\]
where $c=(1-b^2/a)+b^2/a=1$,
\vspmmm
\begin{eqnarray*}
 u &=& u_2-(b/a)\cdot u_1 , \\
  (v\lny )&=& (v_2\lny)-(b/a)\cdot (v_1\lny), \mbox{\ \ and} \\
  D'&=&D+(1/a)\cdot (v_1\lny)u_1 .
\vspmmm
\end{eqnarray*}
Since $c=1$, $u$ and $(v\lny )$ must be $0$. Consequently,
\vspmmm
\begin{equation}
   a\cdot u_2 =b\cdot u_1 \mbox{\ and\ } a\cdot (v_2\lny )=b\cdot (v_1\lny ) .
\vspmmm
\end{equation}
In order to calculate $(I-M)^+$, let $M'=S(I-M)S^{-1}$, where $S=S^{-1}$ is the unitary
matrix
\vspmmm
\[
   S=\frac{1}{d}\cdot
\left(
\begin{array}{ccc}
   -b & & a \\
    a & & b
\end{array}
\right), \;\;
d^2=a^2+b^2.
\vspmmm
\]
After a short computation,
\vspmmm
\[
  M'=\left(
\begin{array}{ccc}
  0 & & 0 \\
  0 & & d^2/a
\end{array}
\right).
\vspmmm
\]
It follows that:
\vspmmm
\[
  (I-M)^+=S\left(
\begin{array}{ccc}
  0 & & 0 \\
  0 & & a/d^2
\end{array}
\right)S, \mbox{\ \ and}
\]
\[
\ffny _{\calu \oplus \calv }\tau = D+
(v_1\lny ,v_2\lny )S\left(
\begin{array}{ccc}
  0 & & 0 \\
  0 & & a/d^2
\end{array}
\right) S \left(
\begin{array}{c}
  u_1 \\
  u_2
\end{array}
\right).
\]
Comparing this expression with
\vspmmm
\[
      \ffny _{\calv } (\ffny _{\calu }\tau )=D'=D+(1/a)\cdot (v_1\lny )u_1 ,\vspmini
\]
we need to prove that
\[  (v_1\lny ,v_2\lny )S\left(
\begin{array}{ccc}
  0 & & 0 \\
  0 & & a/d^2
\end{array}
\right) S \left(
\begin{array}{c}
  u_1 \\
  u_2
\end{array}
\right) = \frac{1}{a}\cdot (v_1\lny )u_1 .
\vspmini
\]
On the left-hand side we have:
\[
  (a/d^4)\cdot (a\cdot v_1\lny +b\cdot v_2\lny )(a\cdot u_1 +b\cdot u_2), \]
which indeed reduces to $(1/a)\cdot (v_1\lny)u_1$ by the help of (8). The proof is complete.
\end{proof}

\begin{theorem}\label{4.4}
The operation $\ffny ^{\calu }$ defines a trace for the monoidal category
$(\iso ,\oplus )$.
\end{theorem}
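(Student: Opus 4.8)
The plan is to verify, one axiom at a time, the trace definition of the opening section for the operation $\ffny^{\calu}$ on $(\iso ,\oplus )$, reading $\oplus $ as the monoidal product, the zero space $\calz $ as its unit, and the permutations $\kappa $ as its symmetries. Well-definedness—that $\ffny^{\calu}\tau $ lands again among isometries—is already Theorem~\ref{4.1}, so what remains is naturality in the interface spaces $\calk ,\call $, dinaturality (sliding) in $\calu $, and the three equational axioms (vanishing, superposing, yanking). I would organize everything around the closed formula $\ffny^{\calu}\tau =\tau_D+\tau_C\circ (I-\tau_A)^+\circ \tau_B$, exploiting that every requirement except sliding leaves the top-left block $\tau_A$, hence the factor $(I-\tau_A)^+$, essentially untouched.

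First I would settle sliding in the special case of an isometric isomorphism (unitary) $\sigma $, which is the logically primitive step. Since unitary conjugation preserves the two Penrose conditions, one has $(I-\sigma^{\dagger}\tau_A\sigma )^+=\sigma^{\dagger}(I-\tau_A)^+\sigma $, and the compensating $\sigma $-factors on $\tau_B$ and $\tau_C$ cancel, so $\ffny^{\calu}$ is invariant. This is exactly the sliding that was invoked—but deferred to the present theorem—in the kernel-on-top reductions of Theorem~\ref{4.1} and Lemmas~\ref{4.2}--\ref{4.3}; establishing it here closes those forward references and retroactively legitimizes every ``without loss of generality $I-\tau_A$ is invertible'' passage.

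The equational axioms I expect to be routine block bookkeeping. Yanking is the one-line computation $\ffny^{\calu}\kappa _{\calu ,\calu }=0+I\circ (I-0)^+\circ I=I$, reading off $\tau_A=0$, $\tau_B=\tau_C=I$, $\tau_D=0$ from the swap. For superposing, tensoring a morphism $g$ that does not meet the feedback loop keeps $\tau_A$ fixed and only appends $g$ block-diagonally to $\tau_D$, giving $\ffny^{\calu}(\tau )\oplus g=\ffny^{\calu}(\tau \oplus g)$; naturality in $\calk ,\call $ is the same observation for pre- and post-composition by isometries on the interface legs, under which the Schur complement transforms covariantly. The first vanishing identity $\ffny^{\calz}\tau =\tau $ is immediate, the $\calz $-block being empty. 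The nested vanishing identity $\ffny^{\calv}(\ffny^{\calu}\tau )=\ffny^{\calu\oplus\calv}\tau $ is precisely the content of Lemmas~\ref{4.2} and~\ref{4.3}; I would glue them by applying the Theorem~\ref{4.1} normal form to the isometry $\ffny^{\calu}\tau $ and splitting $\calv $ along the kernel of $I-(P\bckl M)$, using Lemma~\ref{4.2} on the part where this operator is invertible and Lemma~\ref{4.3} on the kernel part where $P\bckl M=I$, with the splitting itself justified by the unitary sliding just proved.

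The main obstacle is full dinaturality for an \emph{arbitrary} isometry $g:\calu \nyil \calu '$ between possibly distinct spaces, which does not reduce to the unitary case. Here the two sides differ by the putative push-through $(I-g\tau_A)^+\,g=g\,(I-\tau_A g)^+$, and this identity fails for the Moore--Penrose inverse in general—already for a coordinate embedding $g$ when $I-\tau_A g$ is singular—so the proof cannot proceed by formal manipulation of $(\,)^+$ alone. The crux is that both composites $(g\oplus I)\circ \tau $ and $\tau \circ (g\oplus I)$ are genuine isometries, and this constraint forces the cross terms between the kernel of $I-\tau_A$ and its complement to vanish exactly as in Theorem~\ref{4.1}. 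Concretely, after putting each composite into kernel-on-top normal form via the already-proved unitary sliding, the relevant top-left blocks become invertible, the intertwining $(I-g\tau_A)g=g(I-\tau_A g)$ upgrades to the ordinary push-through $(I-g\tau_A)^{-1}g=g(I-\tau_A g)^{-1}$, and the two Schur complements coincide. Matching the two different unitary normalizations—one on $\calu $, one on $\calu '$—against the single isometry $g$ is the delicate point on which the whole theorem turns.
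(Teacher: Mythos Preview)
Your treatment of yanking, superposing, naturality, and vanishing matches the paper closely; in particular, the paper also glues Lemmas~\ref{4.2} and~\ref{4.3} by splitting $\calv $ along the kernel of $I-(P\bckl M)$ and applying the two lemmas to the respective summands, with the unitary sliding you prove first legitimizing the kernel-on-top reductions.  Your argument for sliding of isometric isomorphisms is likewise essentially the paper's (stated there for symmetries, via $(SMS^{-1})^+=SM^+S^{-1}$).

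The substantive divergence is your final paragraph.  You identify full dinaturality for an arbitrary isometry $g:\calu \nyil \calu '$ as ``the delicate point on which the whole theorem turns,'' and sketch a direct attack through separate kernel-on-top normalizations on the two sides.  The paper does \emph{not} attempt this: it invokes \cite[Lemma~2.1]{tra}, which says that once the other trace axioms hold, sliding of symmetries already implies sliding of all morphisms.  Hence general dinaturality is obtained for free after the rest of the proof, and the real labor lies in vanishing, not sliding.  Your direct route is therefore unnecessary, and as you yourself concede it is not completed: matching two independent unitary normalizations (one on $\calu $, one on $\calu '$) against a non-invertible $g$ is genuinely problematic, since the kernels of $I-g\tau_A$ and $I-\tau_Ag$ need not correspond under $g$, so it is unclear that the invertible push-through you want actually becomes available after reduction.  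Replacing that paragraph with the one-line appeal to \cite[Lemma~2.1]{tra} closes the argument.
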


\begin{proof}
Naturality can be verified by a simple matrix computation, left to the reader.
Regarding the sliding axiom, we know from \cite[Lemma~2.1]{tra} that slidings of
symmetries suffice for all slidings in the presence of the other axioms.
Let therefore $\sigma :\calv \nyil \calu $ be an arbitrary symmetry (or permutation,
in general), and
$\tau :\calu \oplus \calk \nyil \calu \oplus \call $ be an isometry with
$\langle [A,B],[C,D]\rangle $ being the biproduct decomposition (matrix) of $\tau $.
Then, for the ``matrix'' $S$ of $\sigma $:\vspmini
\begin{eqnarray*}
  & & \!\!\!\!\!\!\!\!\!\!\!\!\!\!\!
  \ffny ^{\calv }((\sigma \oplus I)\circ \tau \circ (\sigma ^{-1}\oplus I)) \\
  \;\;\; & = & D+CS^{-1}(I-SAS^{-1})^{+}SB =
   D+CS^{-1}(SS^{-1}-SAS^{-1})^{+}SB  \\
 & = &   D+CS^{-1}(S(I-A)S^{-1})^{+}SB =
   D+CS^{-1}S(I-A)^+S^{-1}SB \\
 & = &  D+C(I-A)^+B=\ffny ^{\calu }\tau . \vspmmm
\end{eqnarray*}
In the above derivation we have used the obvious property
$ (SMS^{-1})^+= SM^+S^{-1}$
of the MP inverse. Remember that $\sigma $ is a permutation,
so that $\sigma ^{-1}=\sigma ^{\dagger }$.
Superposing and yanking
are trivial. Therefore the only challenging axiom is vanishing.

  Let $\tau :\calu \oplus \calv \oplus \calk \nyil \calu \oplus \calv \oplus \call $
be an isometry given by the matrix \vspmini
\[
\left(
\begin{array}{cc}
      M & B \\
      C & D
\end{array}
\right),
\mbox{\ where\ }
M=\left(
\begin{array}{cc}
P & Q \\
R & S
\end{array}
\right).
\vspmini
\]
We need to prove that
$
\ffny ^{\calv }(\ffny ^{\calu }\tau )=\ffny ^{\calu\oplus\calv}\tau $.
Again, without loss of generality, we can assume that $(I-P)$ is invertible and
\vspmini
\[
I-P\bckl M=
\left(
\begin{array}{cc}
0 & 0 \\
0 & S_0
\end{array}
\right),
\vspmini
\]
where $\calv =\caln \oplus \calv _0$ and $S_0:\calv_0\nyil \calv_0$ is invertible.
If $\caln $ is the zero space, so that $I-P\bckl M$ itself is invertible, then
the statement follows from Lemma\ref{4.2}. Otherwise \vspmini
\[  \ffny ^{\calv }(\ffny ^{\calu }\tau )=\ffny ^{\calv _0}(\ffny ^{\caln }(\ffny ^{\calu }
  \tau )).
\vspmini
\]
By Lemma \ref{4.3}, $\ffny ^{\caln }(\ffny ^{\calu }\tau )= \ffny ^{\calu \oplus \caln }\tau $,
and by Theorem \ref{4.1}, \vspmini
\[
\ffny ^{\calv _0}(\ffny ^{\calu \oplus \caln }\tau )= \ffny ^{\calu \oplus \caln \oplus
  \calv _0}\tau =\ffny ^{\calu \oplus \calv }\tau .\vspmini
 \]
The proof is now complete.
\end{proof}

  At this point the reader may want to check the validity of the Conway semiring
axioms \vspmini
\[   (ab)^* =a(ba)^*b+1 ,
   \;\;\; (a+b)^*=(a^*b)^*a^* \mbox{\ \ for all $a,b\in \Complex $, where}
\vspmmm \]
\[
     c^*=(1-c)^+=
\left\{
\begin{array}{ll}
    (1-c)^{-1} & \mbox{if $c\neq 1$} \\
    0              & \mbox{if $c=1$.}
\end{array}
\right. \vspmini
\]
Cf.\ \cite{iter}. Obviously, they do not hold, but they come very close. It may also occur to
the reader that the Schur $I$-complement defines a trace in the whole category
$(\fdh ,\oplus )$. Of course this is not true either, because the Banachiewicz
formula does not work for the MP inverse.

 In the recent paper \cite{scott2}, the authors introduced the so called
kernel-image trace as a partial trace \cite{scott1} on any additive category $\calc $.
Given a morphism $\tau :\calu \oplus \calk \nyil \calu \oplus \call $ in $\calc $
with a block matrix
\vspmini
\[ \tau =\langle [\tau _A,\tau _C],[\tau _B,\tau _D]\rangle
\vspmini
\]
as above, the {\em kernel-image trace\/} $\kit ^{\calu }\tau $ is defined if
both $\tau _B$ and $\tau _C$ factor through $(I-\tau _A)$, that is, there exist
morphisms $i:\calk \nyil \calu $ and $k:\calu \nyil \call $ such that
\vspmini
\[  \tau _C= i\circ (I-\tau _A) \mbox{\ \ and\ \ } \tau _B=(I-\tau _A)\circ k.
\vspmini
\]
Cf.\ Figure \ref{fig3}. In this case \vspmini
\[ \kit ^{\calu }\tau =\tau _D+\tau _C\circ k =\tau _D + i\circ \tau _B.
\vspmini
\]
It is easy to see that  $\kit ^{\calu }\tau $ is always defined if
$\tau $ is an isometry, and $\kit ^{\calu }\tau = \ffny ^{\calu }\tau $.
(Use the kernel-on-top transformation of $(I-\tau _A)$ as in Theorem~\ref{4.1}.)
Therefore $\kit ^{\calu }$ is totally defined on $(\iso ,\oplus )$ and it
coincides with $\ffny ^{\calu }$. Using \cite[Remark 3.3]{scott2} we thus
have an alternative proof of our Theorem~\ref{4.4} above.
\begin{figure}[h]
\centering
\vspmini
\includegraphics[scale=.4]{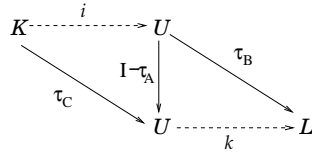}
\vspmini
\caption{The kernel-image trace} \vspmini \label{fig3}
\end{figure}

Now we turn back to the original definition of trace in $(\iso ,\oplus )$
by (1).

\begin{theorem}\label{4.5}
For every isometry $\tau :\calu \oplus \calk \nyil \calu \oplus \call $,
$\fny ^{\calu }\tau $ is well defined as an isometry $\calk \nyil \call $.
Moreover, \vspmini
\[ \fny ^{\calu }\tau =\ffny ^{\calu }\tau .\vspminni \]
\end{theorem}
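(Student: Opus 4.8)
The plan is to reduce the Kleene limit (1) to the same kernel-on-top normal form used in the proof of Theorem~\ref{4.1}, and then to control the sandwiched Neumann series $\tau _C\circ\tau _A^{*n}\circ\tau _B$ by a spectral argument. First I would note that both the partial sums in (1) and the Schur $I$-complement (2) are invariant under a unitary change of basis on the feedback space $\calu $ alone: such a similarity sends $\tau _A$ to $U^{\dagger }\tau _A U$, $\tau _B$ to $U^{\dagger }\tau _B$, $\tau _C$ to $\tau _C U$ and leaves $\tau _D$ fixed, so that every term $\tau _C\circ\tau _A^i\circ\tau _B$ and also $\tau _C\circ(I-\tau _A)^+\circ\tau _B$ are unchanged. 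This is precisely sliding along a symmetry, now at our disposal through Theorem~\ref{4.4}. Hence I may assume $\tau $ is already in the form obtained in Theorem~\ref{4.1}: with $\calu =\caln \oplus\calu _0$, where $\caln $ is the kernel of $(I-\tau _A)$, all entries carrying the superscript $\caln $ vanish, so $\tau _A=I\oplus\tau _A^0$ with $(I-\tau _A^0)$ invertible on $\calu _0$, $\tau _B$ annihilates $\caln $, and $\tau _C$ lands in $\calu _0$. Consequently $\tau _C\circ\tau _A^i\circ\tau _B=\tau _C\circ(\tau _A^0)^i\circ\tau _B$ for every $i$, and $(I-\tau _A)^+$ restricts to $(I-\tau _A^0)^{-1}$ exactly as computed there, so the whole theorem reduces to showing that $\sum _{i\ge 0}\tau _C\circ(\tau _A^0)^i\circ\tau _B$ converges to $\tau _C\circ(I-\tau _A^0)^{-1}\circ\tau _B$.

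Second, I would locate the single source of difficulty. Feeding the input $(u,0)$ with $u\in\calu _0$ through the isometry $\tau $ and using norm preservation gives $\|\tau _A^0 u\|^2+\|\tau _B u\|^2=\|u\|^2$; in particular $\|\tau _A^0\|\le 1$, so $\tau _A^0$ is a contraction whose spectrum lies in the closed unit disk and avoids the point $1$. If that spectrum were contained in the open disk the series would converge geometrically and nothing would remain to prove; the obstruction is entirely the eigenvalues of modulus exactly $1$ (all $\neq 1$), for which the scalar sums $\sum _{i=0}^n\lambda ^i$ only oscillate. Against this I would invoke two facts about a finite dimensional contraction. Each unimodular eigenvalue is semisimple, since a Jordan block of size at least two would force $\|(\tau _A^0)^n\|$ to grow linearly in $n$, contradicting $\|(\tau _A^0)^n\|\le 1$. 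Moreover, for a unimodular eigenvector $u$ one has $\|\tau _A^0 u\|=\|u\|$, so the displayed identity yields $\tau _B u=0$: the operator $\tau _B$ erases every unimodular eigendirection of $\tau _A^0$.

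Finally I would split $\calu _0$ by the commuting spectral projections $P_{<}$ and $P_{=}$ of $\tau _A^0$ onto the parts of its spectrum inside the open disk and on the unit circle. By semisimplicity the range of $P_{=}$ is spanned by genuine eigenvectors, all killed by $\tau _B$, whence $\tau _B=P_{<}\circ\tau _B$ on $\calu _0$ (the projection acting first, in the left-to-right convention). Therefore $\tau _C\circ(\tau _A^0)^i\circ\tau _B$ involves only the restriction of $\tau _A^0$ to the range of $P_{<}$, whose spectral radius is strictly below $1$, so the operator series converges absolutely and its sum is $\tau _C\circ(I-\tau _A^0)^{-1}\circ\tau _B$ (the contribution of $P_{=}$ dropping out because $\tau _B$ kills it and $P_{<}$ commutes with $(I-\tau _A^0)^{-1}$). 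This shows both that the limit in (1) exists and that $\fny ^{\calu }\tau =\ffny ^{\calu }\tau $; since $\ffny ^{\calu }\tau =\tau _A\bckl\tau $, Theorem~\ref{4.1} certifies that the result is an isometry $\calk \nyil\call $. The crux, and the only real obstacle, is the unimodular spectrum of $\tau _A$: the bare Neumann series may diverge there, and the argument succeeds precisely because the isometry condition forces $\tau _B$ to annihilate those directions.
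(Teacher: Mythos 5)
Your proof is correct, but it takes a genuinely different route from the paper's. The paper argues via formal language theory: it represents $\tau $ as a concrete matrix, embeds the computation into the matrix iteration theory of formal power series over the $\omega $-complete Boolean semiring (Chapter~9 of the cited Bloom--\'Esik book), observes that for a one-dimensional feedback space the series is convergent because $|a_{11}|\leq 1$, and then reduces the general case to the scalar one by induction on $\dim \calu $ using the vanishing axiom already established in Theorem~\ref{4.4} together with the fact that every iteration theory is a traced monoidal category. Your argument instead stays entirely inside operator theory and handles all dimensions at once: after the same kernel-on-top unitary reduction as in Theorem~\ref{4.1}, you split the contraction $\tau _A^0$ by its Riesz projections into the part with spectrum in the open disk (where the Neumann series converges geometrically) and the peripheral part, show the peripheral eigenvalues are semisimple by power-boundedness, and use the isometry identity $\|\tau _A^0u\|^2+\|\tau _Bu\|^2=\|u\|^2$ to conclude that $\tau _B$ annihilates every unimodular eigendirection, so the potentially divergent part never contributes. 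What your approach buys is self-containedness and explicitness --- it makes precise the convergence claim that the paper leaves largely implicit (namely, why the series still converges when $\|\tau _A\|=1$), and it avoids both the induction and the iteration-theory machinery; what the paper's approach buys is brevity given Theorem~\ref{4.4} and the cited background, plus the thematic link to Kleene's theorem. Note that both proofs ultimately rest on the same key observation, which you correctly identify as the crux: the isometry condition forces the off-diagonal block to vanish exactly where $\tau _A$ acts unitarily. One cosmetic remark: your invariance of the partial sums and of $\tau _C\circ (I-\tau _A)^+\circ \tau _B$ under a unitary similarity of $\calu $ is verified directly and does not actually need the sliding axiom of Theorem~\ref{4.4} (which is stated there only for permutations), so the appeal to that theorem in your first step can be dropped.
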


\begin{proof}
This is in fact a simple formal language theory exercise. Take a concrete
representation of $\tau $ as an $(n+k)\times (n+l)$ complex matrix $(a_{ij})$,
where $n$, $k$, and $l$ are the dimensions of $\calu $, $\calk $, and $\call $,
respectively. For a corresponding set of variables $X=\{x_{ij}\}$, consider the
matrix iteration theory {\bf Mat }$\!_{L(X^*)}$
determined by the iteration semiring of all {\em formal power series\/} over
the $\omega $-complete Boolean semiring $\bf B$ with variables $X$
as described in Chapter~9 of \cite{iter}. The fundamental observation is
that $\fny ^n(a_{ij})$ is the evaluation of the series matrix $\fny ^n(x_{ij})$ under
the assignment $x_{ij}=a_{ij}$, provided that each entry in this matrix is
convergent. In our case, since $|a _{11}|\leq 1$, this matrix is definitely convergent
if $n=1$, and $\fny ^1(a_{ij})=\ffny ^1(a_{ij})$. A straightforward
induction on the basis of Theorem~\ref{4.4} then yields $\fny ^n(a_{ij})=\ffny ^n(a_{ij})$,
knowing that every iteration theory is a traced monoidal category.
\end{proof}

\begin{corollary}\label{4.6}
The monoidal category $(\dqt ,\boxplus )$ is traced by the feedback $\fny $.
\end{corollary}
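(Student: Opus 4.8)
The plan is to transfer the trace axioms from $(\iso,\oplus)$, where Theorems~\ref{4.4} and~\ref{4.5} have already established them, up to $(\dqt,\boxplus)$. The bridge is the observation that feedback on an automaton is exactly the $\iso$-trace applied to its transition operator, with the state space tensored into the traced object. Precisely, for a DQTA $T=(\calh,\tau):\calu\oplus\calk\nyil\calu\oplus\call$, distributivity of $\otimes$ over $\oplus$ identifies $\tau$ with an isometry $(\calh\otimes\calu)\oplus(\calh\otimes\calk)\nyil(\calh\otimes\calu)\oplus(\calh\otimes\call)$, and under this identification the transition operator of $\fny^\calu T$ is $\ffny^{\calh\otimes\calu}\tau$. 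By Theorem~\ref{4.5} this is a well-defined isometry over $\calh$, so $\fny^\calu T$ is a bona fide DQTA; well-definedness is thereby immediate.

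With this bridge I would dispatch the axioms in increasing order of difficulty. \emph{Vanishing} is essentially free: since $\calh\otimes(\calu\oplus\calv)\cong(\calh\otimes\calu)\oplus(\calh\otimes\calv)$, the identity $\fny^{\calu\oplus\calv}T=\fny^\calv(\fny^\calu T)$ is exactly the $\iso$-vanishing equation $\ffny^{\calh\otimes\calv}(\ffny^{\calh\otimes\calu}\tau)=\ffny^{(\calh\otimes\calu)\oplus(\calh\otimes\calv)}\tau$ of Theorem~\ref{4.4}, while the base case $\fny^\calz T=T$ holds because $\calh\otimes\calz=\calz$ forces $\tau_D=\tau$. \emph{Yanking} is equally direct: the symmetry $\kappa_{\calk,\call}$ is a single-state automaton with state space $\Complex$ and permutation transition operator, so its feedback coincides verbatim with the $\iso$-trace of that permutation, which yanks to the identity by Theorem~\ref{4.4}.

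For \emph{naturality} and \emph{sliding} I would expand the cascade product into its defining matrix, apply distributivity, and reduce to the corresponding properties of $\ffny$. Sliding is handled by \cite[Lemma~2.1]{tra}: it suffices to slide symmetries, which in $\dqt$ are single-state permutation automata, and for these the computation collapses to the symmetry-sliding identity already verified inside the proof of Theorem~\ref{4.4}. Naturality follows because precomposing or postcomposing $T$ with an automaton $f$ on the $\calk$- or $\call$-interface, via the cascade product, leaves the traced block $\tau_A$ untouched, so feedback commutes with the composition once the bookkeeping symmetries $\pi$ are absorbed using naturality of $\ffny$.

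The main obstacle I expect is \emph{superposing}, where the interaction between feedback and the state-space tensoring built into $\boxplus$ becomes delicate. Writing $\calh$ and $\calh'$ for the state spaces of $T$ and $S$ and $\rho$ for the transition operator of $S$, forming $T\boxplus S$ combines the state spaces as $\calh\otimes\calh'$, so in the expanded matrix the traced operator $\tau$ acquires an inert factor on $\calh'$ while $\rho$ acquires an inert factor on $\calh$, the two being intertwined by the symmetries $\pi_{\calh,\calh'}$. One must verify that these symmetries commute past $(I-\tau_A)^+$ — which they do, by the property $(SMS^{-1})^+=SM^+S^{-1}$ for unitary $S$ already exploited in Theorem~\ref{4.4} — and that the untraced component $\rho$ rides through the Schur complement unchanged. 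Once the symmetries are cleared, the required identity $(\ffny^{\calh\otimes\calu}\tau)\boxplus\rho=\ffny^{(\calh\otimes\calh')\otimes\calu}(\tau\boxplus\rho)$ is precisely the superposing axiom of the $\iso$-trace, and the verification closes.
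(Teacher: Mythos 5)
Your overall strategy coincides with the paper's: identify the feedback of a DQTA with the $\ffny$-trace of its transition operator in $(\iso,\oplus)$ and pull the axioms back from Theorems~\ref{4.4} and~\ref{4.5}; your handling of vanishing and yanking is exactly what the paper intends. The one step whose justification is off is the final reduction in your superposing analysis (and, silently, in your naturality and sliding arguments as well). The paper's entire proof rests on a single explicitly stated ``key observation,'' namely $(\ffny^{\calu}\tau)\otimes I_{\calm}=\ffny^{\calu\otimes\calm}(\tau\otimes I_{\calm})$, derived from the MP-inverse identity $(\sigma\otimes I)^+=\sigma^+\otimes I$. This is what lets the inert factor $I_{\calh'}$ that $\boxplus$ (and likewise the cascade product, so your claim that composition ``leaves the traced block $\tau_A$ untouched'' is not literally true) attaches to the traced block pass through the Schur $I$-complement. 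You attribute this passage to ``precisely the superposing axiom of the $\iso$-trace,'' but superposing in $(\iso,\oplus)$ only governs an extra $\oplus$-summand on the untraced interfaces; it says nothing about the multiplicative tensor. Nor does the unitary-conjugation property $(SMS^{-1})^+=SM^+S^{-1}$ that you invoke cover it, since $\sigma\otimes I$ is not a conjugate of $\sigma$ (the dimensions differ). The missing fact $((I-\tau_A)\otimes I)^+=(I-\tau_A)^+\otimes I$ is immediate from the defining equations of the MP inverse, so nothing in your argument fails, but it is a genuinely separate ingredient --- indeed the only one the paper bothers to state --- and should be named as such rather than folded into the $(\iso,\oplus)$ axioms.
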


\begin{proof}
Now the key observation is that, for every isometry $\tau :\calu \oplus \calk
\nyil \calu \oplus \call $ and object $\calm $,
\vspmini
\[
  (\ffny ^{\calu }\tau )\otimes I_{\calm }=\ffny ^{\calu \otimes \calm }(\tau \otimes I_{\calm}).
\vspmini
\]
This equation is an immediate consequence of
\vspmini
\[  (\sigma \otimes I)^+=\sigma ^+\otimes I ,
\vspmini
\]
which is an obvious property of the MP inverse. (Cf.\ the defining equations (i)-(ii) of
$\sigma ^+$.) In the light of this observation, each traced monoidal category axiom
is essentially the same in $(\dqt ,\boxplus )$ as it is in $(\iso ,\oplus )$.
Thus, the statement follows from Theorems~\ref{4.4} and~\ref{4.5}.
\end{proof}

\section{Making Turing automata bidirectional
}
Now we are ready to introduce the model of quantum Turing automata as a real
quantum computing device.
\vspe

\begin{definition}
A {\em quantum Turing automaton\/} (QTA, for short) of {\em rank\/} $\calk $ is a triple
$T=(\calh ,\calk , \tau )$, where $\calh $ and $\calk $ are finite dimensional
Hilbert spaces and $\tau :\calh \otimes \calk \nyil \calh \otimes \calk $
is a {\em unitary\/} morphism in $\fdh $.
\end{definition}

\begin{figure}[h]
\centering
\includegraphics[scale=.5]{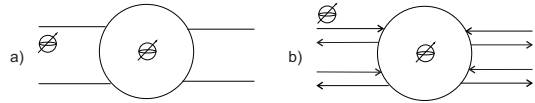}
\caption{One cell of a Turing machine as a QTA } \label{fig4}
\end{figure}
Again, two automata $T_i:(\calh _i, \calk ,\tau _i)$, $i=1,2$ are called
{\em isomorphic} if there exists an isometric isomorphism
$\sigma :\calh _1\nyil \calh _2$ for which $\tau _2 =(\sigma ^{\dagger }\otimes I_{\calk })
\circ \tau _1 \circ (\sigma \otimes I_{\calk })$.

{\bf Example.}\ \ In Figure \ref{fig4}a, consider the abstract representation of one tape cell
drawn from a hypothetical Turing machine having two states: $1$ and $2$. The tape alphabet
$\{ 0,1\}$ is also binary,
which means that there is a single qubit sitting in the cell. Thus, $\calh $ is 2-dimensional.
The control particle $c$ can reside on any of the given four interfaces. For example,
if $c$ is on the top left interface, then the control is coming from the left in state 1.
After one move, $c$ can again be on any of these four interfaces, so that the dimension
of $\calk $ is 4. Notice the undirected nature of one move, as opposed to the rigid
input$\to $output orientation forced on DQTA. The situation is, however, analogous to
having a separate input and dual output interface for each undirected one in
a corresponding DQTA. Cf.\ Figure \ref{fig4}b. The quantum Turing automaton obtained in this way
will then have a transition operator $\tau $ as an $8\times 8$ unitary matrix.
\vsp

Let $\calc $ be an arbitrary traced monoidal category. In order to describe the structure
of (undirected) quantum Turing automata we shall use a variant of the Joyal-Street-Verity
$Int$ construction \cite{tra} by which tensor is defined on objects in $Int(\calc)$ as
\vspmini
\[ (X,U)\otimes (X',U')=(X\otc X', U\otc U'),
\vspmini
\]
and on morphisms $f:(X,U)\nyil (Y,V)$, $f':(X',U')\nyil (Y',V')$ as
\vspmini
 \[ f\otimes f'=(1_X\otc c_{X',V}\otc 1_{V'})\circ (f\otc f')\circ
         (1_Y\otc c_{U,Y'}\otc 1_{U'}). \vspmini
 \]
Recall that $f:X\otimes V\nyil Y\otimes U$ in $\calc $. Correspondingly,
\vspmini
\[ 1_{(X,U)}=1_{X\otc U},\; c_{(X,U),(Y,V)}=c_{X,Y}\otc c_{V,U},\mbox{\ and\ }
  d_{(X,U)}=e_{(X,U)}=(c_{X,U})_{\calc}.
\vspmini
\]
The reason for the change is that, by
the original definition, the self-dual objects $(X,X)$ in $\intc $ are not closed
for the tensor.

\begin{definition}
A CC-category $\calc $ is {\em completely symmetric\/} if
$A=A^{**}$, $(A\otimes B)^*=A^*\otimes B^*$, and the natural isomorphism
$A^*\otimes B^*\!=\!(A\otimes B)^*\cong B^*\!\otimes A^*$ determined by the duality
$()^*$ coincides with $c_{A^*,B^*}$ for all objects $A,B$.
\end{definition}

In the above definition, ``the duality $()^*$'' refers to the pure autonomous
structure of $\calc $, forgetting the symmetries.
Observe that complete symmetry implies that the coherence conditions in effect for
the symmetries $c_{A,B}$ are automatically inherited by the units $d_A$ and
counits $e_A$ in an appropriate way, e.g.,
\[ d_{A^*}=d_A\circ c_{A^*,A} \mbox{\ \ and\ \ } d_{A\otimes B}=(d_A\otimes d_B)\circ
   (1_A\otimes c_{A,B^*}\otimes 1_B),
\vspmini
\]
as one would normally expect. These equations do not necessarily hold
without complete symmetry.

\begin{proposition}\label{P5.1}
For every traced monoidal category $\calc $, the CC-category
$\intc $ is completely symmetric.
\end{proposition}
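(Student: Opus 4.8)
The plan is to verify the three clauses of complete symmetry directly for $\intc$. Using the explicit formulas of the ($\calc$-variant) $Int$ construction, the dual of an object $(X,U)$ is the component swap $(X,U)^*=(U,X)$, with unit and counit both equal to the $\calc$-symmetry $(c_{X,U})_{\calc}$. The first two clauses are then immediate. Since swapping components twice is the identity, $A^{**}=A$ holds on the nose. For the tensor, writing $A=(X,U)$ and $B=(Y,V)$, one computes $(A\otimes B)^*=(X\otc Y,U\otc V)^*=(U\otc V,X\otc Y)$ and $A^*\otimes B^*=(U,X)\otimes(V,Y)=(U\otc V,X\otc Y)$, so the two objects are literally equal. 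This is exactly the effect for which the tensor on objects was modified: the reversal of the dual component vanishes, so that $(X,X)$-objects stay closed under $\otimes$ and the dual of a tensor coincides with the tensor of the duals.

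The genuine content is the third clause, that the canonical duality isomorphism $\phi:A^*\otimes B^*=(A\otimes B)^*\nyil B^*\otimes A^*$ equals the symmetry $c_{A^*,B^*}$, which by the symmetry formula unwinds to $c_{A^*,B^*}=c_{U,V}\otc c_{Y,X}$. Rather than computing $\phi$ head-on, I would use that the canonical isomorphism between two duals of one object is the \emph{unique} isomorphism intertwining their unit/counit data. Here $A^*\otimes B^*$ carries the $Int$-chosen dual structure for $A\otimes B$, whose unit is $(c_{X\otc Y,\,U\otc V})_{\calc}$, while $B^*\otimes A^*$ carries the standard ``dual of a tensor'' structure assembled from the units $d_A=(c_{X,U})_{\calc}$, $d_B=(c_{Y,V})_{\calc}$ and the symmetries of $\calc$. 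It then suffices to check that $c_{A^*,B^*}$ is compatible with both structures, i.e.\ that $(c_{A^*,B^*}\otimes 1_{A\otimes B})$ composed with the $Int$-unit of $A\otimes B$ reproduces the standard unit of $B^*\otimes A^*$; uniqueness then forces $\phi=c_{A^*,B^*}$.

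To discharge this compatibility I would translate the equation back into $\calc$. Every (co)unit in sight is a symmetry, and composition in $\intc$ is a $\calc$-trace over the hidden component, so the equation becomes an identity among composites of $\calc$-symmetries, associators, and trace loops on the factors $X,Y,U,V$. The feedback introduced by $Int$-composition collapses by the yanking axiom (together with sliding for the symmetry rearrangements and the interposed symmetries $c_{X',V}$, $c_{U,Y'}$ built into the $Int$ tensor on morphisms), leaving a plain equality of two composites of symmetries. By Mac Lane's coherence theorem for symmetric monoidal categories it is then enough to confirm that both composites realize the same permutation of the four tensor factors, which a direct tracking of the swaps confirms.

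The main obstacle is precisely this trace-laden bookkeeping in the third clause: the $Int$ composition hides a feedback over the swapped dual components, so even though every morphism in sight is ``just a symmetry,'' the composite is not literally a tensor of symmetries until the feedback loops are removed. Yanking is the axiom that performs this removal, after which coherence finishes the argument cleanly; by contrast clauses $A=A^{**}$ and $(A\otimes B)^*=A^*\otimes B^*$ are formal consequences of the component-swap description of the dual together with the modified tensor.
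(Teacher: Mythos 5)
Your argument is correct and follows the same route as the paper, whose proof simply declares the result immediate from the definitions: you identify $(X,U)^*=(U,X)$ with unit and counit $(c_{X,U})_{\calc}$, observe that the modified tensor makes $A^{**}=A$ and $(A\otimes B)^*=A^*\otimes B^*$ literal equalities of objects, and reduce the third clause to a uniqueness-of-duals argument settled by yanking, sliding, and symmetric monoidal coherence in $\calc$. The only caveat is that the final compatibility computation is sketched rather than carried out, but the strategy is sound and the paper itself leaves every detail to the reader.
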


\begin{proof}
Immediate by the definitions.\end{proof}

Let $\intcn $ denote the full subcategory of $\intc $ determined by its self-dual
objects $(X,X)$. Again, as an immediate consequence of the definitions, $()^*$ defines
a dagger structure on $\intcn $ through which it becomes a dagger compact closed category.
Clearly, the dagger (dual) of $f:X\otc Y\nyil Y\otc X$ as a morphism $(X,X)\nyil (Y,Y)$
is $c_{Y,X}\circ f\circ c_{Y,X}$. In general, we put forward the following
definition.

\begin{definition}
A {\em completely symmetric self-dual CC category\/} (\sdcc\ category,
for short) is a completely symmetric CC category such that $A=A^*$ for all objects $A$.
\end{definition}

\begin{corollary}
In every \sdcc\ category $\calc $, the contravariant functor $()^*$
defines a dagger structure on $\calc $ by which it becomes dagger compact closed.
Consequently, $d_A=d_A\circ c_{A,A}$ and $e_A=c_{A,A}\circ e_A$ hold in $\calc $.
For every traced monoidal category $\calc $, $\intcn $ is an \sdcc\ category.
\end{corollary}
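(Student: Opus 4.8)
The plan is to prove the abstract statement first and then specialize to $\intcn$. In an \sdcc\ category $\calc$ I would take the dagger to be the canonical dual functor, $f^\dagger := f^*$. That $()^*$ is contravariant is the usual compact-closed fact; it is identity-on-objects because $A=A^*$ by self-duality, and involutive because $A=A^{**}$ holds strictly by complete symmetry, so that $(f^*)^*=f$ with no coherence isomorphism left to absorb. The substantive step, which I expect to be the main obstacle, is the monoidal coherence: one must check $(f\otimes g)^\dagger=f^\dagger\otimes g^\dagger$ together with compatibility of $()^\dagger$ with the symmetries $c_{A,B}$, the unit, and the associativity isomorphisms. This is exactly where the two extra clauses of complete symmetry are used: the equality $(A\otimes B)^*=A^*\otimes B^*$ and the stipulation that the canonical isomorphism $(A\otimes B)^*\cong B^*\otimes A^*$ be $c_{A^*,B^*}$ together cancel the transposition that the dual of a tensor would otherwise carry, turning $()^*$ into an honest identity-on-objects symmetric monoidal involution rather than a braided-twisted one. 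Once this is verified, $\calc$ is a dagger monoidal category.

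Next I would read off the two consequences $d_A=d_A\circ c_{A,A}$ and $e_A=c_{A,A}\circ e_A$ directly from the coherence conditions that complete symmetry forces on the units and counits. Specializing the identity $d_{A^*}=d_A\circ c_{A^*,A}$ noted above to the self-dual case $A=A^*$ collapses $d_{A^*}$ to $d_A$ and $c_{A^*,A}$ to $c_{A,A}$, which gives the first equation; the second comes from the dual coherence condition for $e_A$ in the same way. These are precisely the identities needed for the coherence diagram of Figure~\ref{fig1} to commute, so $\calc$ is dagger compact closed.

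Finally, for the claim that $\intcn$ is an \sdcc\ category I would invoke Proposition~\ref{P5.1}, by which $\intc$ is already completely symmetric. It then remains only to observe that the full subcategory $\intcn$ on the self-dual objects $(X,X)$ is itself self-dual, since $(X,X)^*=(X,X)$, and closed under tensor, since $(X,X)\otimes(X',X')=(X\otc X',X\otc X')$ is again self-dual — this closure being the very reason the modified $Int$ tensor was adopted — and contains the unit $(I,I)$. The complete-symmetry axioms are identities among objects and morphisms all of which lie in $\intcn$, so they restrict to this subcategory; hence $\intcn$ is a completely symmetric, self-dual CC category, that is, an \sdcc\ category. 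By the first two parts its dual functor is then a dagger compact closed structure, in agreement with the explicit formula $f^\dagger=c_{Y,X}\circ f\circ c_{Y,X}$ recorded just above the statement.
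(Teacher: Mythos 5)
Your proposal is correct and follows the same route the paper intends: the paper's own proof is just the pointer ``Cf.\ Figure~\ref{fig1}'', i.e.\ it treats the corollary as immediate from the definitions of complete symmetry, the coherence identities $d_{A^*}=d_A\circ c_{A^*,A}$ (specialized to $A=A^*$), Proposition~\ref{P5.1}, and the closure of the self-dual objects under the modified $Int$ tensor. Your write-up simply makes explicit the checks the paper leaves implicit, including the correct identification of $f^\dagger=f^*$ and of the two displayed equations as exactly the commutativity of the diagram in Figure~\ref{fig1}.
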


\begin{proof}
Cf.\ Figure \ref{fig1}.
\end{proof}

Now let us assume that $\calc $ is a dagger traced monoidal category, that is,
$\calc $ has a monoidal dagger structure for which \vspmmm
\[    \mathrm{Tr}^U (f^{\dagger})=(\mathrm{Tr}^U f)^{\dagger}
        \mbox{\ \ for $f:U\otimes A\nyil U\otimes B$.} \vspmini \]
This is definitely the case for the subcategory $\dqtn $ of $\dqt $ consisting of automata
having an isometric isomorphism as their transition operator. Moreover, the map
$T\mapsto T\boxplus T^{\dagger}$ is injective in $\dqtn $.

\begin{theorem}\label{T5.3}
For every dagger traced monoidal category $\calc $, the map
$f\mapsto f\otimes f^{\dagger }$ defines a strict dagger-traced-monoidal functor
$F_{\calc }: \calc \nyil \intcn $ by which $F_{\calc }A=(A,A)$ for each object $A$.
\end{theorem}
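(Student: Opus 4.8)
The plan is to verify, in turn, that $F_{\calc}$ is well defined on morphisms, that it preserves identities and the dagger, that it preserves composition, and finally that it is strict monoidal and trace preserving. First I would observe that $F_{\calc}$ really lands in $\intcn$: since $f:A\nyil B$ produces $f^{\dagger}:B\nyil A$, the $\calc$-tensor $f\otc f^{\dagger}$ is a morphism $A\otc B\nyil B\otc A$, which is precisely an $\intc$-morphism $(A,A)\nyil (B,B)$ between self-dual objects, hence a morphism of $\intcn$; and $F_{\calc}A=(A,A)$ is self-dual. Identity preservation is immediate, since $F_{\calc}1_A=1_A\otc 1_A^{\dagger}=1_{A\otc A}=1_{(A,A)}$. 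Dagger preservation is almost as quick: recalling from the text that the dual of $f\otc f^{\dagger}:A\otc B\nyil B\otc A$, viewed as $(A,A)\nyil (B,B)$, is $c_{B,A}\circ (f\otc f^{\dagger})\circ c_{B,A}$, one rewrites $(f\otc f^{\dagger})\circ c_{B,A}=c_{A,B}\circ (f^{\dagger}\otc f)$ by naturality of the symmetry, whereupon $c_{B,A}\circ c_{A,B}=1$ collapses the expression to $f^{\dagger}\otc f=F_{\calc}(f^{\dagger})$, using $(f^{\dagger})^{\dagger}=f$.

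The crux is composition preservation. The plan is to write the $\intc$-composite of $F_{\calc}f=f\otc f^{\dagger}$ and $F_{\calc}g=g\otc g^{\dagger}$ out explicitly. For $f:A\nyil B$ and $g:B\nyil C$ the middle object is $(B,B)$, so the composite $(A,A)\nyil (C,C)$ is $\mathrm{Tr}^{B}(h)$, where $h:B\otc A\otc C\nyil B\otc C\otc A$ is the standard wiring built from $f\otc f^{\dagger}$, $g\otc g^{\dagger}$ and symmetries that feeds the forward output of $f\otc f^{\dagger}$ into $g\otc g^{\dagger}$ and routes the traced copy of $B$ from the backward part of $g\otc g^{\dagger}$ back into the backward part of $f\otc f^{\dagger}$. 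Tracing the wires, the forward input $A$ travels $A\xrightarrow{f}B\xrightarrow{g}C$ to the forward output, while the backward input $C$ travels $C\xrightarrow{g^{\dagger}}B\xrightarrow{f^{\dagger}}A$ to the backward output, the second passage going through the traced $B$-wire. The aim is thus to show $\mathrm{Tr}^{B}(h)=(f\circ g)\otc (g^{\dagger}\circ f^{\dagger})=F_{\calc}(f\circ g)$.

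The hard part will be justifying that this trace genuinely collapses to plain composition, i.e.\ that no residual feedback loop survives. The key structural fact is that $f\otc f^{\dagger}$ and $g\otc g^{\dagger}$ are \emph{pure} tensors: their forward components ($f$, $g$) depend only on the forward inputs and their backward components ($f^{\dagger}$, $g^{\dagger}$) only on the backward inputs, so forward and backward data never interact. I would exploit this by first using naturality of the symmetries to disentangle $h$ into a forward block carrying $f\circ g$ tensored with a backward block, then applying the \emph{superposing} axiom to pull the forward factor $f\circ g$ outside the trace, leaving $\mathrm{Tr}^{B}$ of a purely backward configuration; the \emph{yanking} axiom together with sliding (dinaturality in the traced object) then evaluates that remaining trace as the straight composite $g^{\dagger}\circ f^{\dagger}$. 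This ``pure tensor, hence trace collapses'' phenomenon is exactly what makes $F_{\calc}$ functorial, and is the single point requiring genuine use of the traced monoidal axioms.

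The remaining clauses are routine variations on the same theme. Strict monoidality on objects holds by definition, $F_{\calc}(A\otc B)=(A\otc B,A\otc B)=(A,A)\otimes (B,B)$; on morphisms one checks $F_{\calc}(f\otc g)=F_{\calc}f\otimes F_{\calc}g$ by expanding the variant $\intc$-tensor of the excerpt and using $(f\otc g)^{\dagger}=f^{\dagger}\otc g^{\dagger}$, the interposed symmetries matching up precisely; symmetry preservation then follows from $c_{A,B}^{\dagger}=c_{B,A}$ and the formula $c_{(A,A),(B,B)}=c_{A,B}\otc c_{B,A}$. For trace preservation the same collapse argument applies: the canonical trace of $\intcn$ over $(U,U)$ applied to the pure tensor $f\otc f^{\dagger}$ reduces to $\mathrm{Tr}^{U}(f)\otc \mathrm{Tr}^{U}(f^{\dagger})$, and the dagger-traced hypothesis $\mathrm{Tr}^{U}(f^{\dagger})=(\mathrm{Tr}^{U}f)^{\dagger}$ converts this into $(\mathrm{Tr}^{U}f)\otc (\mathrm{Tr}^{U}f)^{\dagger}=F_{\calc}(\mathrm{Tr}^{U}f)$, as required.
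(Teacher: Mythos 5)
Your proof is correct and takes the only reasonable route: it is precisely the ``routine computation'' that the paper leaves entirely to the reader, carried out in full. The one genuinely non-trivial point --- that the $Int$-composite of the pure tensors $f\otimes f^{\dagger}$ and $g\otimes g^{\dagger}$ collapses, via naturality of the symmetries, superposing, and yanking, to $(f\circ g)\otimes(g^{\dagger}\circ f^{\dagger})$, and likewise that the canonical trace over $(U,U)$ splits as $\mathrm{Tr}^U(f)\otimes \mathrm{Tr}^U(f^{\dagger})$ so that the dagger-traced hypothesis $\mathrm{Tr}^U(f^{\dagger})=(\mathrm{Tr}^U f)^{\dagger}$ finishes the trace-preservation clause --- is identified and handled correctly.
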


\begin{proof}
Routine computation, left to the reader.\end{proof}

  At this point we have sufficient knowledge to understand the structure and behavior
of QTA. Indeed, any such automaton $(\calh ,\caln , \tau )$ with
$\tau :\calh \otimes \caln \nyil \calh \otimes \caln $
is in fact a morphism $(I,I)\nyil (\caln ,\caln )$
in the \sdcc\ category $Int_0(\dqtn)$. Using the terminology of \cite[Definition~3.2]
{abr1}, such a morphism is the {\em name\/} of any appropriate morphism
$(\calk ,\calk )\nyil (\call ,\call )$ in $Int_0(\dqtn )$
such that $\caln =\calk \oplus \call $. The
natural isomorphism induced by duality simply collapses these hom-sets into their
name hom-set. However, the reader should not be confused by the fact that
the name of a morphism $f:(X,X)\nyil (Y,Y)$ in $\intcn $
--- that is, $f: X\otimes Y\nyil Y\otimes X$ in $\calc $ --- is in fact a morphism
$X\otimes Y\nyil X\otimes Y$ in $\calc $, actually $f\circ c_{Y,X}$.

  In particular, for every automaton $T:\calk \nyil \call $ in $\dqtn $, the name of
$\fdqtn T=T\boxplus T^{\dagger }$ as a morphism $(I,I)\nyil (\calk \oplus
\call , \calk \oplus \call )$ is the QTA of rank $\calk \oplus \call $ which reflects
the joint behavior of $T$ and its reverse. Of course, however, the whole structure
of QTA is a lot richer than simply the image of $\dqtn $ under $\fdqtn $. This observation
is analogous to the obvious fact that the tensor of two vector spaces is richer than the
collection of tensors of individual vectors. Building on this analogy we can
consider the collection of QTA as a suitable algebraic structure, rather than
a category.

  An equivalent formalism for \sdcc\ categories in terms of so called indexed monoidal
algebras has been worked out in \cite{tur,tur1}. This new formalism deals with
QTA as ``vectors'' rather than morphisms, in the spirit explained in the previous
paragraph. The basis of the equivalence between indexed monoidal algebras and
\sdcc\ categories is the naming mechanism, which identifies morphisms with
their names. The advantage of using this algebraic framework is that it
simplifies the understanding of \sdcc\ categories by essentially collapsing
the dual category structure, which may sometimes be extremely but unnecessarily
convoluted.

\section{Conclusion}
We have provided a theoretical foundation for the study of quantum Turing machines
having a quantum control.
The dagger compact closed category $\fdh $ of finite dimensional Hilbert
spaces served as the basic underlying structure for this foundation. We
narrowed down the scope of this category to isometries, switched from multiplicative
to additive tensor, and defined a new additive trace operation by the help of
the Moore-Penrose generalized inverse. This trace was then
carried over to the monoidal category of directed quantum Turing automata.
Finally, we applied the $Int$ construction to obtain a compact closed category,
and restricted this category to its self-dual objects to arrive at our ultimate
goal, the model (indexed monoidal algebra) of undirected quantum Turing automata.
\vspmmm
\bibliographystyle{eptcs}

\end{document}